\numberwithin{equation}{section}
\newcommand*\circled[1]{\tikz[baseline=(char.base)]{
            \node[shape=circle,fill,inner sep=1pt] (char) {\textcolor{white}{{\footnotesize #1}}};}}
    \newcommand{\changescolor}{black}
    \newcommand{\changescolor}{blue}
\let\oldcomment=\comment
\let\comment=\undefined
\let\comment=\oldcomment
    \NewDocumentCommand{\noopadded}{o m}{#2}
    \NewDocumentCommand{\noopdeleted}{o m}{#2}
    \NewDocumentCommand{\noopreplaced}{o m m}{#2}
    \renewcommand\added\noopadded
    \renewcommand\deleted\noop
    \renewcommand\replaced\noopreplaced
    \renewcommand\highlight\noopadded
\NewCommandCopy{\oldBigcup}{\bigcup}
\renewcommand{\bigcup}{\mathsmaller\oldBigcup\limits}
\newcommand{\TextOrMathSimple}[1]{\TextOrMath{$#1$}{#1}}
\newcommand{\smallTag}[1]{\TextOrMathSimple{\mathsmaller{\mathrm{\textsc{#1}}}}}
\newcommand{\algoSpacing}{\vspace{2pt}}
\newcommand{\algoSize}{\small}
\renewcommand{\gets}{\mathrel{{:}{=}}}
\setlist[itemize]{leftmargin=*}
\newcommand{\parabf}[1]{\noindent\textbf{#1}}
\newcommand{\revisionComment}[1]{\textcolor{gray}{\textit{[#1]}}}
\newcommand\COMMENT[1]{}
\newcommand{\CodeIn}[1]{{\small \texttt{#1}}}
\newcommand{\toolname}{\textrm{Repilot}}
\newcommand{\tool}{\toolname\xspace}
\newcommand{\ide}{IDE\xspace}
\newcommand{\idefull}{Integrated Development Environment\xspace}
\newcommand{\caret}{caret\xspace}
\newcommand{\cefull}{Completion Engine\xspace}
\newcommand{\plm}{LLM\xspace} 
\newcommand{\plmfull}{Large Language Model\xspace}
\newcommand{\nlp}{NLP\xspace}
\newcommand{\nlpfull}{Natural Language Processing\xspace}
\newcommand{\llm}{LLM\xspace} %
\newcommand{\llmfull}{Large Language Model\xspace}
\newcommand{\coreApproachDef}{Completion-Guided Search Space Pruning\xspace}
\newcommand{\mlmfull}{Masked Language Modeling\xspace}
\newcommand{\mlm}{MLM\xspace}
\newcommand{\mspfull}{Masked Span Prediction\xspace}
\newcommand{\msp}{MSP\xspace}
\newcommand{\apr}{APR\xspace}
\newcommand{\aprfull}{Automated Program Repair\xspace}
\newcommand{\nmt}{NMT\xspace}
\newcommand{\nmtfull}{Neural Machine Translation\xspace}
\newcommand{\codebert}{CodeBERT\xspace}
\newcommand{\graphcodebert}{GraphCodeBERT\xspace}
\newcommand{\codegen}{CodeGEN\xspace}
\newcommand{\codex}{Codex\xspace}
\newcommand{\polycoder}{PolyCoder\xspace}
\newcommand{\codetf}{CodeT5\xspace}
\newcommand{\incoder}{\textsc{InCoder}\xspace}
\newcommand{\plbart}{PLBART\xspace}
\newcommand{\abvanilla}{$\tool_\varnothing$}
\newcommand{\abnomem}{$\tool_{\textsc{p}}$}
\newcommand{\abmem}{$\tool_{\textsc{p}}^{\textsc{m}}$}
\newcommand{\abactive}{\tool}
\newcommand\chatrepair{\textsc{ChatRepair}\xspace}
\newcommand\fitrepair{FitRepair\xspace}
\newcommand{\rewardrepair}{RewardRepair\xspace}
\newcommand{\alpharepair}{AlphaRepair\xspace}
\newcommand{\cure}{CURE\xspace}
\newcommand{\recoder}{Recoder\xspace}
\newcommand{\coconut}{CoCoNuT\xspace}
\newcommand{\dlfix}{DLFix\xspace}
\newcommand{\sequencer}{SequenceR\xspace}
\newcommand{\tbar}{TBar\xspace}
\newcommand{\prapr}{PraPR\xspace}
\newcommand{\avatar}{AVATAR\xspace}
\newcommand{\simfix}{SimFix\xspace}
\newcommand{\fixminer}{FixMiner\xspace}
\newcommand{\capgen}{CapGen\xspace}
\newcommand{\jaid}{JAID\xspace}
\newcommand{\sketchfix}{SketchFix\xspace}
\newcommand{\nopol}{NOPOL\xspace}
\newcommand{\jgenprog}{jGenProg\xspace}
\newcommand{\jmutrepair}{jMutRepair\xspace}
\newcommand{\jkali}{jKali\xspace}
\newcommand{\iclr}{\textsc{Synchromesh}\xspace}
\newcommand{\dfj}{Defects4J\xspace}
\newcommand{\fmVar}[1]{\TextOrMathSimple{\mathit{#1}}}
\newcommand{\fmType}[1]{\TextOrMathSimple{\mathsf{#1}}}
\newcommand{\fmConst}[1]{\TextOrMathSimple{\mathbf{#1}}}
\newcommand{\fmSeq}[1]{\TextOrMathSimple{#1^*}}
\newcommand{\fmPow}[1]{\TextOrMathSimple{\mathcal{P}(#1)}}
\mathchardef\fmHyphen="2D
\newcommand{\fmStaticChecker}{\TextOrMathSimple{\Phi}}
\newcommand{\fmProgWithCheckRhs}{\TextOrMathSimple{(\fmCeAlphabet, \fmStaticChecker)}}
\newcommand{\fmProgWithCheck}{\TextOrMathSimple{\fmDef{PL}_\smallTag{s}}}
\newcommand{\fmProgWithCheckEq}{\TextOrMathSimple{\fmProgWithCheck{} = \fmProgWithCheckRhs{}}}
\newcommand{\fmFeasibleMark}{\vDash}
\newcommand{\fmFeasible}[1]{\TextOrMathSimple{#1\fmFeasibleMark{}\fmStaticChecker{}}}
\newcommand{\fmPrefixSet}{\TextOrMathSimple{\mathrm{Prefix}}}
\newcommand{\fmInfeasible}[1]{\TextOrMathSimple{#1\not\fmFeasibleMark{}\fmStaticChecker{}}}
\newcommand{\fmCont}{\fmVar{cont}}
\newcommand{\fmDef}[1]{\TextOrMathSimple{\mathbf{#1}}}
\newcommand{\fmPlm}{\fmDef{LM}}
\newcommand{\fmPlmTok}{\TextOrMathSimple{\Sigma_{\smallTag{lm}}}}
\newcommand{\fmEnc}{\TextOrMathSimple{\fmVar{encode}}}
\newcommand{\fmDec}{\TextOrMathSimple{\fmVar{decode}}}
\newcommand{\fmEncRep}{\TextOrMathSimple{\fmType{EncRep}}}
\newcommand{\fmDecRep}{\TextOrMathSimple{\fmType{DecRep}}}
\newcommand{\fmDecRepRhs}{\TextOrMathSimple{\fmPlmTok \to \interval 0 1}}
\newcommand{\fmDecRepDef}{\TextOrMathSimple{\fmDecRep = \fmPlmTok \to \interval 0 1}}
\newcommand{\fmCeAlphabet}{\TextOrMathSimple{\Sigma_{\smallTag{pl}}}}
\newcommand{\fmCe}{\fmDef{CE}}
\newcommand{\fmCeNone}{\fmConst{unknown}}
\newcommand{\fmCeComplete}{\fmVar{complete}}
\newcommand{\fmCeProg}{\fmVar{prog}}
\newcommand{\fmCeCaret}{\fmVar{\caret}}
\newcommand{\fmOfType}{{\:\!{:}\:}}
\newcommand{\fmCeCompRet}{\fmPow{\fmSeq{\fmCeAlphabet}}}
\algnewcommand\fmAlgoInputsValue{\textbf{Inputs:}}
\algnewcommand\fmAlgoInputs{\item[\fmAlgoInputsValue]}
\algnewcommand\fmAlgoOutputValue{\textbf{Output:}}
\algnewcommand\fmAlgoOutput{\item[\fmAlgoOutputValue]}
\algrenewcommand\algorithmicfunction{\textbf{func}}
\newcommand{\algoFuncColon}{\TextOrMathSimple{{\:\!{:}}}}
\newcommand{\fmFunc}[1]{\textsf{\textbf{\textsc{#1}}}}
\newcommand{\fmUtilFunc}[1]{\textsc{#1}}
\newcommand{\fmRepair}{\fmFunc{Repair}}
\newcommand{\fmActiveComplete}{\fmFunc{Actively\-Complete}}
\newcommand{\fmPruneDecode}{\fmFunc{Guid\-ed\-Prune}}
\newcommand{\fmEndToken}{\fmConst{end\fmHyphen token}}
\newcommand{\fmProg}{\fmVar{prog}}
\newcommand{\fmRange}{\fmVar{range}}
\newcommand{\fmRangeStart}{\fmVar{start}}
\newcommand{\fmGen}{\fmVar{hunk}}
\newcommand{\fmNextToken}{\fmVar{next\fmHyphen token}}
\newcommand{\fmPlmInputs}{\fmVar{encoder\fmHyphen inputs}}
\newcommand{\fmBuildInputs}{\fmUtilFunc{BuildInputs}}
\newcommand{\fmConcat}{\fmUtilFunc{Str}}
\newcommand{\fmSample}{\fmUtilFunc{Sample}}
\newcommand{\fmProgType}{\fmSeq{\fmCeAlphabet}}
\newcommand{\fmDecodeDetail}{\fmVar{decoder}}
\newcommand{\fmPatch}{\fmVar{patch}}
\newcommand{\fmTokens}{\fmVar{tokens}}
\newcommand{\fmToken}{\fmVar{tok}}
\newcommand{\fmRejected}{\fmVar{rejected}}
\newcommand{\fmAccepted}{\fmVar{accepted}}
\newcommand{\fmCompVar}{\fmVar{c}}
\newcommand{\fmContinuation}{\fmVar{completion}}
\newcommand{\fmCompletions}{\fmVar{completions}}
\newcommand{\fmCompletionTokens}{\fmVar{completion\fmHyphen toks}}
\newcommand{\fmCommonPrefix}{\fmUtilFunc{CommonPrefix}}
\newcommand{\fmPlmEncRep}{\fmVar{encoded\fmHyphen rep}}
\newcommand{\fmTokenDecode}{\fmUtilFunc{AlignTokens}}
\newcommand{\fmKwTrue}{\fmConst{true}}
\newcommand{\fmKwContinue}{\fmConst{continue}}
\newcommand{\fmKwAnd}{\fmConst{and}}
\newcommand{\fmArbFunc}{\fmVar{f}}
\newcommand{\fmArbPartialFunc}{\fmVar{a}}
\newcommand{\fmArbDom}{\fmType{X}}
\newcommand{\fmArbRange}{\fmType{Y}}
\newcommand{\fmArbFuncType}{\TextOrMathSimple{\fmArbDom \to \fmArbRange}}
\newcommand{\fmArbPartialFuncType}{\TextOrMathSimple{\fmArbDom \rightharpoonup \fmArbRange}}
\newcommand{\fmArbIn}{\fmVar{x}}
\newcommand{\fmArbOut}{\fmVar{y}}
\newcommand{\fmRemoved}{\TextOrMathSimple{\fmArbFunc_{\mathrm{removed}}}}
\newcommand{\distance}{6pt}
\begin{document}
\title{Copiloting the Copilots: Fusing Large Language Models with Completion Engines for Automated Program Repair}

\author{Yuxiang Wei}
\affiliation{
  \institution{University of Illinois}
  \city{Urbana-Champaign}
  \country{USA}                    %
}
\email{ywei40@illinois.edu}          %

\author{Chunqiu Steven Xia}
\affiliation{
  \institution{University of Illinois}
  \city{Urbana-Champaign}            %
  \country{USA}                    %
}
\email{chunqiu2@illinois.edu}          %

\author{Lingming Zhang}
\affiliation{
  \institution{University of Illinois}
  \city{Urbana-Champaign}
  \country{USA}                    %
}
\email{lingming@illinois.edu}          %

\begin{abstract}

During Automated Program Repair (APR), it can be challenging to synthesize correct patches for real-world systems in general-purpose programming languages.
Recent \plmfull{s} (\plm{s}) have been shown to be helpful ``copilots'' in assisting developers with various coding tasks, and have also been
directly applied for patch synthesis. However, most \plm{s} treat programs as sequences of tokens, meaning that they are ignorant of the underlying semantics constraints of the target programming language. This results in plenty of statically invalid generated patches, impeding the practicality of the technique. 
Therefore, we propose $\mathbf{\toolname}$, a general code generation framework to further copilot the AI ``copilots'' (i.e., \plm{s}) by synthesizing more \emph{valid} patches during the repair process.
Our key insight is that many \plm{s} produce outputs autoregressively (i.e., token by token), resembling human writing programs, which can be significantly boosted and guided through a \cefull.
\tool{} synergistically synthesizes a candidate patch through the interaction between an \plm{} and a \cefull, which 1) prunes away infeasible tokens suggested by the \plm and 2) proactively completes the token based on the suggestions provided by the \cefull.
Our evaluation on a subset of the widely-used Defects4j 1.2 and 2.0 datasets shows that \tool{} outperforms state-of-the-art techniques by fixing 27\% and 47\% more bugs, respectively. Moreover, \tool{} produces more valid and correct patches than the base LLM with the same budget. 
While we focus on leveraging \tool for \apr in this work, the overall approach is also generalizable to other code generation tasks.
\end{abstract}

\begin{CCSXML}
<ccs2012>
<concept>
<concept_id>10011007.10011074.10011099.10011102.10011103</concept_id>
<concept_desc>Software and its engineering~Software testing and debugging</concept_desc>
<concept_significance>500</concept_significance>
</concept>
<concept>
<concept_id>10011007.10011074.10011092.10011782</concept_id>
<concept_desc>Software and its engineering~Automatic programming</concept_desc>
<concept_significance>300</concept_significance>
</concept>
</ccs2012>
\end{CCSXML}

\ccsdesc[500]{Software and its engineering~Software testing and debugging}
\ccsdesc[300]{Software and its engineering~Automatic programming}

\keywords{Program Repair, \plmfull{}, \cefull{}}

\maketitle

\section{Introduction}

\begin{figure*}
\centering
\includegraphics[width=0.8\linewidth, page=4]{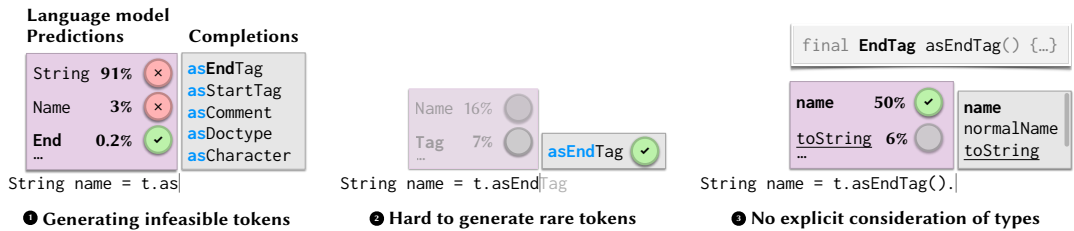}
\caption{Limitations of existing \llm-based \apr approaches.}
\label{fig:example}
\end{figure*}

\aprfull (\apr) seeks to reduce the manual bug-fixing effort of developers by automatically synthesizing patches given the original buggy code~\cite{gazzola2019aprsurvey}.
State-of-the-art traditional \apr tools are mainly based on handcrafted repair templates to match the buggy code patterns and apply the corresponding code changes~\cite{ghanbari2019prapr, liu2019tbar}. Although outperforming other traditional techniques~\cite{long2015spr, legoues2012genprog, mechtaev2016angelix}, such tools can only fix the bug types within the preset templates and cannot generalize to new bug types. 
With the development of Deep Learning (DL) techniques, researchers build learning-based \apr~\cite{zhu2021recoder, jiang2021cure, ye2022rewardrepair} tools based on \nmtfull (\nmt)~\cite{sutskever2014mt} architecture. They train \nmt models to translate buggy code into correct code by learning from pairs of buggy and fixed code scraped from open-source commits. However,
as discussed in prior work~\cite{xia2022alpharepair}, the training sets of these tools can be limited in size and also contain irrelevant or noisy commits.

More recently, researchers have leveraged the growth in the field of \nlp to directly use \llmfull{s} (\llm{s})~\cite{chen2021codex,feng2020codebert}
for \apr~\cite{xia2022repairstudy,joshi2023repair, xia2022alpharepair}. \llm{s} not only achieve impressive performance on many \nlp tasks~\cite{brown2020gpt3}, but are also shown to be reliable ``copilots''\footnote{One popular AI pair programmer tool (based on Codex~\cite{chen2021codex}) is named Copilot~\cite{GithubCopilot}.} in assisting developers with various coding tasks~\cite{austin2021synthesis, liu2023your}. The reason is that modern \llm{s} often include large amounts of available open-source code repositories as part of their training dataset. 
Recognizing the power of \llm{s}, researchers have recently applied \llm{s} for \apr: instead of translating buggy code into correct code, \llm{s} are directly used to synthesize the correct patch from the surrounding context. 
\alpharepair~\cite{xia2022alpharepair} reformulates the \apr problem as a \textit{cloze} (or infilling) task~\cite{aghajanyan2022cm3,fried2023incoder}: it first replaces the buggy code snippets with masked tokens and then uses \codebert~\cite{feng2020codebert} to fill correct code in given the surrounding context. Other studies on \llm{s} for \apr have applied even larger \llm{s} with different repair settings (including generating complete patch functions)~\cite{xia2022repairstudy, kolak2022patch, prenner2021codexws}. 

While prior \llm{} for \apr techniques achieve state-of-the-art bug-fixing performance, they use \llm{s} in a black-box manner, where the underlying \llm{} generate programs according to the token distribution without any structural or semantic understanding of the code.
To highlight the limitations with current \llm{s} for \apr tools, In \Cref{fig:example} we show 3 scenarios where \llm can generate incorrect patches. \textbf{\circled{1} Generating infeasible tokens}. In \Cref{fig:example}.1, the \llm has a high probability (>90\%) of generating \CodeIn{String} to complete the \CodeIn{asString} method. However \CodeIn{asString} is not a valid field access for the object \CodeIn{t} and is also not part of the scope of the current buggy method. In this case, the patchs generated using \CodeIn{asString} will never be correct as it cannot compile. By directly using the model probabilities, \llm{s} are likely to generate many patches using invalid tokens and decrease the likelihood of generating the correct patch with \CodeIn{End} (0.2\%). \textbf{\circled{2} Hard to generate rare tokens}. \llm{s} usually cannot generate a complete identifier name in one step since it uses subword tokenization~\cite{sennrich2015neural} to break uncommon words into smaller subwords. These uncommon words manifest as rare identifiers in code, where identifier names are CamelCase or underscore combinations of multiple words (e.g., \CodeIn{asEndTag} in \Cref{fig:example}.2). As such, \llm{s} need to generate these identifiers step by step, needing not only multiple iterations but also accurate output in each step. Since prior approaches~\cite{xia2022repairstudy, kolak2022patch} sample based on probability, the likelihood of completing a rare token to fix a bug can be extremely low. \textbf{\circled{3} No explicit consideration of types}. In addition to potentially generating out-of-scope identifiers, \llm{s} do not have access to various type information that gives hints to the valid identifiers. In \Cref{fig:example}.3, the return type of \CodeIn{asEndTag()} is \CodeIn{EndTag}, whose definition is not explicitly given to the \llm in its immediate context. As such, \llm{s} do not know the correct member fields of \CodeIn{EndTag} and may generate invalid patches containing identifiers that do not fit the required type. On the contrary, a \cefull has full access to the project and can easily figure out the return type of \CodeIn{asEndTag()} through static analysis on the abstract syntax tree of the program. By treating code as a sequence of textual tokens, the important type information is not encoded.

To address the aforementioned limitations, we propose \tool{}, a framework to further copilot the AI ``copilots'' (i.e., \llm{s}) via fusing \llm{s} with \cefull{s} to synthesize more \textit{valid} patches. \cefull{s}~\cite{lsp} can parse incomplete programs and reason about the semantics in an error-tolerant manner. Our key insight is \emph{to liken \llm autoregressive token generation as a human developer code writing, where the \cefull can provide real-time updates to check if the human/\llm{s} written partial code is valid}. \tool{} first uses the \llm to provide the probabilities of generating the next token in the patch and then queries the \cefull to modify the probability list by dynamically zeroing the probabilities of invalid tokens. We can then sample from the new probability list to select the next token. Furthermore, recognizing the ability for \cefull{s} to suggest completions, we use this feature whenever there is only one possible identifier suffix to complete the context. This not only allows \tool to generate patches with valid rare and long identifiers but also reduces the work of \llm{s} needed to iteratively generate long identifier names.

For example, \tool{} directly prunes the \CodeIn{String} and \CodeIn{Name} tokens in \Cref{fig:example}.1 as they are infeasible according to the \cefull{}, but still accepts the correct \CodeIn{End} token. In \Cref{fig:example}.2, the \cefull recognizes that \CodeIn{asEndTag} is the only valid continuation to the prefix \CodeIn{asEnd}, so \tool{} directly completes this token without querying the \llm{}. To combat the time cost of \cefull, we implement several optimization techniques to minimize the overhead. Note that the recent \iclr work~\cite{poesia2022synchromesh} also employs a \cefull for reliable code generation with \llm{s}. However, it relies on expert-designed constraints and only targets domain-specific languages (e.g., SQL). \tool{} directly works for general-purpose programming languages while introducing minimal overhead and can proactively complete the current generation using the \cefull{} without querying the \llm{}.

To demonstrate the generalizability of \tool, we instantiate \tool with two \llm{s} having distinct architectures and sizes:
\codetf{}-large~\cite{wang2021codet5}, an encoder-decoder \llm with 770 million parameters, and \incoder-6.7B~\cite{fried2023incoder}, a decoder only \llm with 6.7 billion parameters, both capable of code infilling from prefix and suffix context.
We further implement a Java \cefull for \tool based on the Eclipse JDT Language Server~\cite{eclipseLS,modifiedJDT} since it provides various semantics-based analyses through a consistent Language Server Protocol~\cite{lsp}.
We evaluate \tool on a subset of the widely studied \dfj 1.2 and 2.0 datasets~\cite{just2014dfj} and demonstrate state-of-the-art results in both the number of correct fixes and compilation rate --- the percentage of the generated patches that can be successfully compiled. Furthermore, while we evaluated \tool for \apr in this work, we believe the overall framework can be easily applied to other code generation tasks, including code completion~\cite{ding2023crosscodeeval, zhang2023repocoder}, program synthesis~\cite{liu2023your,poesia2022synchromesh}, and test generation~\cite{titanfuzz, fuzz4all}. In summary, we make the following contributions: 
\begin{itemize}
    \item \textbf{Direction.} We open a new direction for fusing \llm{s} with \cefull{s} for more powerful \apr and beyond. Compared to prior techniques which either perform post-processing to fix invalid generations or use simple static methods to approximate these valid tokens, our approach leverages a powerful \cefull to directly provide accurate feedback on partial programs to avoid invalid token generations.
    \item \textbf{Technique.} We implement \tool, an \llm for \apr approach instantiated with the \codetf and \incoder models to perform cloze-style repair combined with our modified Eclipse JDT Language Server~\cite{eclipseLS,modifiedJDT} as the \cefull{}.
    In \tool, we use the \cefull to systematically prune invalid tokens generated by \llm{s} and to directly complete code given the current prefix. Furthermore, we implement optimizations to significantly reduce the overhead of \tool. We have open-sourced our tool at: \url{https://github.com/ise-uiuc/Repilot}.
    \item \textbf{Study.} We compare \tool against state-of-the-art \apr tools on \dfj 1.2 and 2.0. \tool is able to achieve new state-of-the-art results of 66 \dfj 1.2 single-hunk bugs and 50 \dfj 2.0 single-line bugs fixed respectively with 30 more combined fixes across both datasets compared to the previous best baseline. Our further evaluation shows that \tool consistently improves the validity and correctness of the generated patches with a limited overhead (7\% for \codetf and negligible for \incoder).
\end{itemize}

\section{Background and related work}

\subsection{\llmfull{s} for Code}

Recent advances in \nlpfull (\nlp) have empowered the idea of using \plmfull{s} (\plm{s}) that are pre-trained on enormous corpora of natural language and code for various code-related tasks~\cite{austin2021synthesis, chen2021codex, xu2022systematic, groundcopilot, alphacode}.
\plm{s} are based on the transformer architecture~\cite{vaswani2017attention} that can be categorized into \textbf{encoder-only}, \textbf{decoder-only} and \textbf{encoder-decoder}. Encoder-only models use only the encoder component by training using \mlmfull (\mlm)~\cite{devlin2018bert} objective where a small percentage (e.g., 15\%) of the tokens are masked on. The goal of \mlm is to recover these masked tokens given the surrounding context. Encoder-only models such as \codebert~\cite{feng2020codebert} and \graphcodebert~\cite{guo2021graphcodebert} are designed to provide a representation of the input code to be used for downstream tasks such as code classification~\cite{yang2020xlnet}. Decoder-only models, on the other hand, aim to autoregressively generate tokens based on all previously generated tokens. \codegen~\cite{Nijkamp2022CG,nijkamp2023codegen2}, \codex~\cite{chen2021codex} and \polycoder~\cite{xu2022systematic} are examples of decoder-only \llm{s} where they can be used for code autocompletion tasks.
Different from encoder- and decoder-only \llm{s}, encoder-decoder models (e.g., \codetf~\cite{wang2021codet5,wang2023codet5} and \plbart~\cite{ahmad2021PLBART}) combine both encoder and decoder together and jointly train both components together. A commonly used pre-training objective for encoder-decoder models is \mspfull (\msp) where random spans (multiple consecutive tokens) are replaced with single masked tokens and the models learn to \emph{fill in} the masked span with the correct sequence of tokens. 
Furthermore, decoder-only models like \incoder~\cite{fried2023incoder} can also do infilling through the causal language modeling~\cite{aghajanyan2022cm3} objective. Instead of using the decoder to predict the next token in the original training data, similar to \msp, \incoder also replaces random spans with masked span tokens.
During training, \incoder learns to autoregressively recover the original spans. With this training strategy, \incoder can perform infilling with bidirectional context similar to encoder-decoder models, enabling cloze-style repair.

\subsection{Code Completion}
Code completion is one of the most frequently used features in \idefull{s} (\ide{s}).
It substantially alleviates the complexity of software development by
interactively suggesting program constructs after the user's \caret{} position
while programmers are typing,
including identifier names and library APIs.
Code completion is now an indispensable infrastructure of the most widely-used programming languages
and can be easily integrated into most modern text editors
thanks to the
presence of the Language Server Protocol~\cite{lsp}, which standardizes the communication 
between tools and language services.
Traditionally, a \emph{semantics-based} \emph{\cefull} is implemented on top of
a series of complex \emph{incremental} syntactic and semantic analyses
of the target programming language,
since it needs to understand partially written programs and provide real-time feedback.
The \cefull has full access to a project repository and its dependencies
and can produce suggestions according to its semantic understanding.
Recent advances in \llm{s} demonstrate the capability of generating
long and complicated completions.
However, they may produce unreasonable programs due to the limitation in the code 
context size and the loss of program analysis by simply treating programs as token sequences.
In this paper, we use the term \cefull to refer to the
\emph{semantics-based} one.
We formally define the expected properties of a \cefull in our framework in \Cref{def:strictCe}.

\subsection{\aprfull}

\aprfull (\apr) aims to generate patches given the buggy code location and the bug-exposing tests. Traditionally, \apr approach can be categorized as constraint-based~\cite{demacro2014nopol, mechtaev2016angelix, le2017s3, long2015spr}, heuristic-based~\cite{legoues2012genprog, wen2018capgen, le2016hdrepair} and template-based~\cite{hua2018sketchfix, martinez2016astor, koyuncu2020fixminder, liu2019tbar, liu2019avatar, ghanbari2019prapr}. Among these classic techniques, template-based tools have been shown to achieve the highest number of bug fixes by using handcrafted repair templates to target specific bug patterns~\cite{ghanbari2019prapr}. However, these handcrafted patterns cannot cover all types of bugs that exist and as such, template-based tools cannot fix bugs outside of their pre-determined templates. 

To address the issue faced by template-based \apr tools, researchers resort to \nmtfull (\nmt)~\cite{sutskever2014mt} to develop \nmt{}-based \apr tools~\cite{ye2022rewardrepair, zhu2021recoder, li2020dlfix, lutellier2020coconut, jiang2021cure, chen2018sequencer}. \nmt{}-based \apr tools train an \nmt model to translate the input buggy code into the correct code through bug-fixing datasets containing pairs of buggy and fixed code. However, these bug-fixing datasets may contain only a small number/types of bug fixes, especially compared to a large amount of available open-source code snippets, due to the difficulty in obtaining bug-fixing commits~\cite{xia2022alpharepair}. Additionally, the datasets can fail to filter out unrelated commits~\cite{jiang2021extract} such as refactoring, which adds noise to the training datasets. Due to this reliance on training using bug-fixing datasets, these \nmt{}-based tools also cannot generalize to bug types not seen during training. 

Recently, researchers begin to directly apply \llm{s} for \apr~\cite{xia2022repairstudy}. \alpharepair~\cite{xia2022alpharepair} is the first to directly use \llm{s} for \textit{cloze-style} (or infilling-style) \apr: it masks out the buggy code snippet and then uses \codebert~\cite{feng2020codebert} to directly fill in the correct code given the surrounding context. While \alpharepair demonstrates the potential to use encoder-only models for cloze-style \apr, other studies~\cite{xia2022repairstudy, prenner2021codexws, kolak2022patch} have looked into applying all three types of \llm architecture.
\fitrepair{}~\cite{xia2023revisiting} further improves \alpharepair via domain-specific fine-tuning and prompting strategies leveraging the plastic surgery hypothesis~\cite{Barr14fse}. Even more recently, researchers have applied dialogue-based models for APR~\cite{xia2023conversational, xia2023conversation,sobania2023analysis,cao2023study}. For example,
\chatrepair{}~\cite{xia2023conversation} proposes a fully automated conversational APR approach by learning from prior patching attempts, including both patch code and test failure information. %

Compared to traditional and NMT-based \apr techniques, \llm-based techniques are able to achieve new state-of-the-art bug-fixing results~\cite{xia2022alpharepair, xia2022repairstudy}. While the performance is impressive, one particular limitation of these techniques is the lack of guidance in patch generation. Prior work mainly treats the \llm as a black box and only queries the model via beam search~\cite{xia2022alpharepair} or sampling~\cite{xia2022repairstudy, prenner2021codexws, kolak2022patch}. This means \llm{s}, while powerful, may still generate invalid patches given the current code context.

In this work, we address these limitations by using a \emph{semantics-based} \cefull to guide and prune the \llm{} search space. Our approach is orthogonal to recent \llm-based \apr techniques and can be easily combined with them. In fact, NMT-based \apr techniques have also attempted to tackle this problem.
\cure~\cite{jiang2021cure} first statically obtains the valid identifiers and forces the \nmt model to only select from valid identifiers during generation.
\recoder~\cite{zhu2021recoder} builds an edit-based \nmt model to enforce syntax correctness and introduce placeholder tokens and then as a post-processing step, \recoder will replace placeholder tokens with statically determined valid identifiers.
\rewardrepair~\cite{ye2022rewardrepair} on the other hand, attempts to increase the number of compilable patches by penalizing uncompilable patches during training. 
Compared to these prior techniques, \tool is more general and effective. \tool{} does not require any domain-specific training and
leverages the incremental analysis of off-the-shelf \cefull{s} to enforce guaranteed constraints to guide \llm{s} on the fly.

\section{Preliminaries}

\label{sec:formalization}
In this section, we first define concepts about programming languages used throughout the paper (\Cref{subsec:lang}).
Then we discuss
the \emph{formal abstractions} of the two key components used in our \tool{} framework:
\cefull{} (\Cref{subsec:absCE}) and \plmfull{} (\Cref{subsec:absLLM}).
These two abstractions are crucial in that each of them describes a collection of fitting \emph{implementations},
which forms the reason why \tool{} is a generalizable framework.

\subsection{Languages with Static Checking}
\label{subsec:lang}
We now introduce the concept of programming languages equipped with static checking
and define the feasibility of a partial program
before the formulation of the \cefull (\Cref{def:ce}).

\begin{definition}[Programming Language with Static Checking]
A programming language with static checking is defined as
a pair of its character set \fmCeAlphabet{} and its static specification 
$\fmStaticChecker{}\subseteq \fmProgType{}$ as a unary relation on \fmProgType{}.
\begin{equation}
\fmProgWithCheckEq{},
\end{equation}
Given a $\fmProg\in\fmProgType{}$, the notation $\fmStaticChecker(\fmProg)$
(or $\fmProg\in\fmStaticChecker$)
states that \fmProg{} is a statically valid program in this language.
For statically-typed programming languages like Java, the compilation check is a kind of static checking.
\end{definition}

\begin{definition}[Static Feasibility of A Partial Program]
For a partially written program $\fmProg{}\in\fmProgType$,
we say it is feasible at the \caret{} position \fmCeCaret{}
with respect to the static specification \fmStaticChecker{},
written as $\fmFeasible{(\fmProg{}, \fmCeCaret{})}$,
if and only if there exists a possible continuation after \fmCeCaret{} with which completing \fmProg{} results in a statically valid program.
The definition can be formally written as
\begin{equation}
\fmFeasible{(\fmProg{}, \fmCeCaret{})} \triangleq \exists \fmCont{} \in \fmProgType{}, \fmStaticChecker(\fmProg\left[\fmCeCaret\leftarrow\fmCont\right]),
\end{equation}
where we use the notation $\fmProg\left[\fmCeCaret\leftarrow\fmCont\right]$ as
the action of completing \fmProg{} at \fmCeCaret{} with \fmCont, i.e.
\begin{equation}
\fmProg\left[\fmCeCaret\leftarrow\fmCont\right] \triangleq \fmProg_{0..\fmCeCaret} \cdot \fmCont \cdot \fmProg_{\fmCeCaret{}..|\fmProg|}.
\end{equation}
In \Cref{algo:overall}, we extend this notation to accept a $\fmRange\fmOfType{}\mathbb N \times \mathbb N$,
so that $\fmProg\left[\fmRange\leftarrow\fmGen\right]$ specifies the action of replacing \fmProg{}'s contents within \fmRange{} with \fmGen{}.
\end{definition}

\subsection{Abstraction of Completion Engines}
\label{subsec:absCE}
A \cefull, showed in \Cref{fig:ce}, provides suggested continuations to
a partially written program given the \caret{} position.

\begin{definition}[\cefull]
\label{def:ce}
Formally speaking, a \cefull \fmCe{} is a pair
\begin{equation}
    \fmCe = (\fmCeAlphabet, \fmCeComplete),
    \label{eq:ce}
\end{equation}
where \fmCeAlphabet{} is the character set of the target language, and
\begin{equation}
    \fmCeComplete\fmOfType{}(\fmProgType{}, \mathbb N) \to \fmCeCompRet \cup \{\fmCeNone\}
    \label{eq:ce:getContext}
\end{equation}
is a function to obtain the completions given a program at some \caret position, with \fmCeNone{} indicating the engine cannot determine the suggestions from the code context
(e.g., when completing a variable declaration).
Note that we make a distinction between \fmCeNone{} and empty completions $\varnothing$
because in this paper we are interested in a specific group of \emph{strict} \cefull{s} that
helps determine the feasibility of a partial program.

\begin{figure}[htbp]
\centering
\includegraphics[width=0.8\linewidth,page=3]{Drawing.pdf}
\caption{Abstraction of a \cefull.}
\label{fig:ce}
\end{figure}

\end{definition}

\begin{definition}[Strict \cefull{}]
\label{def:strictCe}
Assume that a \cefull \fmCe{} can obtain a
set of completions given a program \fmProg{} feasible at \fmCeCaret{} (i.e., $\fmFeasible{(\fmProg, \fmCeCaret)}$):
\begin{equation}
\begin{gathered}
\fmCompletions = \fmCeComplete(\fmProg, \fmCeCaret)\\
\text{where }\fmCompletions \ne \fmCeNone.
\end{gathered}
\end{equation}
Then, \fmCe{} is said to be strict %
if and only if, under this condition,
continuing \fmProg{} with any code that does not match with this set of completions yields an infeasible program at the new \caret{} position:
\begin{equation}
\begin{gathered}
    \forall \fmCompVar\not\in\fmPrefixSet(\fmCompletions), \fmInfeasible{(\fmProg', \fmCeCaret')},\\
    \text{where }\fmProg' = \fmProg\left[\fmCeCaret\leftarrow\fmCompVar\right]
    \text{ and }\fmCeCaret' = \fmCeCaret + |\fmCompVar|,\\
    \fmPrefixSet({\cdot}) = \{c \mid s \in {\cdot} \text{ and } c\text{ is a prefix of }s\text{ or vice versa}\}.
    \label{eq:strictCe}
\end{gathered}
\end{equation}
This definition essentially means that a strict \cefull should not
give incorrect suggestions. It should return \fmCeNone{} whenever unsure.
A trivial strict \cefull can be the one that always returns \fmCeNone{}.
\end{definition}
\subsection{Abstraction of \llm{s}}
\label{subsec:absLLM}

In this section, we give a formal abstraction of an encoder-decoder based \plm{} as showed in \Cref{fig:llm},
which in practice is more complex but conforms to the abstraction.
The abstraction subsumes decoder-only models and can also describe encoder-only models that use the encoder outputs directly as token probabilities for generation.

\begin{figure}[htbp]
\centering
\includegraphics[width=0.75\linewidth,page=2]{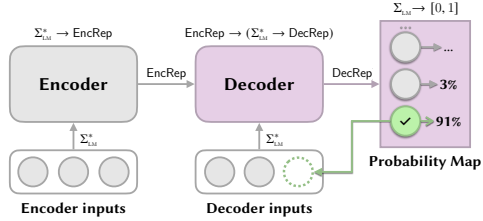}
\caption{Abstraction of encoder-decoder based \plm{}.}
\label{fig:llm}
\end{figure}

\begin{definition}[\plmfull{}]
\label{def:llm}
Formally, We define an encoder-decoder based \plm{} \fmPlm{} as a 3-tuple 
\begin{equation}
    \fmPlm = (\fmPlmTok, \fmEnc, \fmDec),
    \label{eq:llm}
\end{equation}
where \fmPlmTok{} is a \emph{vocabulary} consisting of the set of tokens defined by the model.
The encoder \fmEnc{} is a function that maps from an input sequence to its encoded representation in \fmEncRep{}:
\begin{equation}
\fmEnc \fmOfType{}\fmSeq{\fmPlmTok} \to \fmEncRep.
\end{equation}
The decoder \fmDec{}, defined below, then
\emph{memorizes} the encoded representation in \fmEncRep{}, takes as input a sequence of tokens, and produces as output its decoded representation in \fmDecRep{}:
\begin{equation}
\fmDec\fmOfType{}\fmEncRep \to \left(\fmSeq{\fmPlmTok} \to \fmDecRep\right).
\end{equation}
In this definition, the decoder memorizing the encoded representation is modeled as a \emph{higher-order function} that returns a detailed decoding function given the encoded representation.
The decoded representation in \fmDecRep{} essentially assigns a probability to each token in the vocabulary to state its likelihood of being the next token in the sequence. Therefore, we can \textit{define} \fmDecRep{} as
\begin{equation}
\fmDecRepDef.
\label{eq:decRepDef}
\end{equation}
\end{definition}

\section{Approach}

Following most recent deep learning based \apr{} tools~\cite{xia2022alpharepair, zhu2021recoder, ye2022rewardrepair},
\tool{} focuses on fixing single-hunk bugs, where the patch is obtained by changing a continuous section of code under perfect fault localization. \tool{} can be extended for multi-hunk bugs by replacing all hunk locations at the same time with separate infilling tokens and using \llm to generate the replacement hunks. 
Benefiting from the era of \llm{s},
as shown in \Cref{fig:cloze}, in this paper, we treat the repair problem as a \emph{cloze} task~\cite{xia2022alpharepair},
where a patch is formed by first replacing the buggy hunk with a masked span token (\CodeIn{<SPAN>}) and then using the \llm to direct synthesize the fixed hunk from the surrounding code context to replace the span token.
\begin{figure}[!htbp]
\centering
\includegraphics[width=\linewidth]{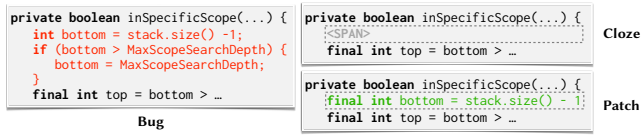}
\caption{Cloze-style program repair.}
\label{fig:cloze}
\end{figure}

\subsection{Overview}
\begin{figure*}[!t]
\centering
\includegraphics[page=1,width=0.85\linewidth]{Drawing.pdf}
\caption{Overview of \tool{}.}
\label{fig:overview}
\end{figure*}
\Cref{fig:overview} shows an overview of how \tool{} synthesizes a program
that acts as the repaired hunk of the original buggy program.
The generation loop consists of a loop that keeps updating the generation with tokens
newly generated from the synergy between the language model and \cefull.
The loop starts by applying the current generation as the input to the language model (\circled{1}),
which returns a search space of a mapping from a suggested next token to its probability.
\tool{} then enters a token selection phase that repeatedly
samples a token from the search space, checking its feasibility, and pruning the search space
until a token is accepted.
Every time a token is sampled, \tool{} first checks if it hits the memorization (\circled{2}),
which stores the tokens that are known to be feasible or infeasible.
The memorization of infeasible tokens includes the use of a prefix tree data structure (Trie)
discussed in \Cref{subsec:memorize}.
When the token hits the memorization and is infeasible, the search space is pruned
by setting this token's probability to zero (\circled{3}),
and the next sampling will run on the updated search space.
In this way, the same token is not sampled again during the token selection phase.
If the token misses the memorization, the search space is pruned under the guidance of
the \cefull (\circled{4}), which we elaborate in \Cref{subsec:prune}.
Provided that the sampled token is rejected by the \cefull,
\tool{} zeroes out its probability. Otherwise, it is accepted and this
token selection process terminates.
The memorization gets updated in both cases (\circled{5}).
After a token is accepted (\circled{6}), we further leverage the \cefull,
trying to actively complete the token (\circled{7}).
The active completion, discussed in \Cref{subsec:active},
may either produce more tokens or add nothing to the accepted token.
Finally, \tool{} appends all the newly generated tokens to the current generation and begins
a new loop until a complete patch is generated. 
The loop stops when the model generates the special token \fmEndToken{}.

\Cref{algo:overall} details this process
and shows how a complete patch program is generated
using what is established in \Cref{sec:formalization}.
It additionally describes how \tool{} performs the pre-processing (\Crefrange{algo:overall:buildInput}{algo:overall:initGen})
and formalizes completion-guided pruning procedure illustrated in \Cref{fig:overview}
using two functions \fmPruneDecode{} and \fmActiveComplete{} (\Crefrange{algo:overall:loopBegin}{algo:overall:updateGen}).
In all our algorithms, we use a "dot-notation" to specify an entity of a tuple (e.g., $\fmPlm{}.\fmEnc{}$),
but use an abbreviation form when the context is clear
(e.g., $\fmPlmTok$ and $\fmCeAlphabet$ for $\fmPlm{}.\fmPlmTok{}$ and $\fmCe.\fmCeAlphabet$).
We also optionally apply \emph{type annotations} for clarification. Note that we simplify the definition of the \cefull by restricting it to be called with one program. In practice, a \cefull is always initialized with the entire project and can provide suggestions based on global information.

\begin{algorithm}
\caption{Main Repair Loop of \tool}\label{algo:overall}
\algoSize

\begin{algorithmic}[1]
\fmAlgoInputs{\plmfull{} \fmPlm{}, \cefull{} \fmCe{}, Buggy program \fmProg, and Range of buggy hunk \fmRange.}
\fmAlgoOutput{Patch for the buggy program.}
\algoSpacing

\Func{\fmRepair}{\fmPlm{}, \fmCe{}, $\fmProg{}\fmOfType{}\fmProgType$, $\fmRange\fmOfType{}\mathbb N \times \mathbb N$}{\fmProgType{}}
\LComment{Initializations based on \Cref{def:llm}}
\State $\fmPlmInputs\fmOfType{}\fmSeq{\fmPlmTok} \gets \fmBuildInputs(\fmProg{}, \fmRange{})$ \label{algo:overall:buildInput}
\State $\fmPlmEncRep\fmOfType{}\fmEncRep \gets \fmPlm.\fmEnc(\fmPlmInputs)$ \label{algo:overall:getEncodeRep}
\State $\fmDecodeDetail\fmOfType{}\fmSeq{\fmPlmTok} \to \fmDecRep{} \gets \fmPlm.\fmDec(\fmPlmEncRep)$ \label{algo:overall:getDecode}
\State $\fmGen\fmOfType{}\fmSeq{\fmPlmTok} \gets \varepsilon$ \label{algo:overall:initGen}
\While{\fmKwTrue} \label{algo:overall:loopBegin}
    \LComment{Form \fmPatch{} by replacing buggy hunk with \fmGen}
    \State $\fmPatch\gets \fmProg\left[\fmRange \leftarrow \fmConcat(\fmGen) \right]$ \label{algo:overall:getPatch}
    \LComment{Move \fmCeCaret{} after the current generation}
    \State $\fmCeCaret\gets \fmRange.\fmRangeStart + |\fmConcat(\fmGen)|$ \label{algo:overall:getCaret}
    \State $\fmTokens\fmOfType{}\fmDecRepRhs{} \gets \fmDecodeDetail(\fmGen)$ \label{algo:overall:getTokens}
    \State $\fmNextToken\fmOfType\fmPlmTok{} \gets \fmPruneDecode(\fmCe, \fmPatch, \fmCeCaret, \fmTokens)$ \label{algo:overall:prunedSearch}
    \If{$\fmNextToken = \fmEndToken{}$} \label{algo:overall:CheckEnd}
        \State \Return{\fmPatch} \label{algo:overall:return}
    \EndIf
    \State $\fmCompletionTokens{}\fmOfType\fmSeq{\fmPlmTok} \!\gets\! \fmActiveComplete(\fmCe, \fmPatch, \fmCeCaret)$ \label{algo:overall:activeComplete}
    \State $\fmGen \gets \fmGen \cdot \fmNextToken \cdot \fmCompletionTokens$ \label{algo:overall:updateGen}
\EndWhile
\EndFunc
\end{algorithmic}
\end{algorithm}

\subsection{\coreApproachDef}
\label{subsec:prune}
In this section, we explain the core idea of how \tool{} utilizes a \cefull{} to prune the search space of an \llm{}.

\Cref{algo:prune} explains in depth how a \cefull helps prune the model's search space.
The function \fmPruneDecode{} takes as inputs a \cefull \fmCe{}, the current program \fmProg{},
the current \caret{} position \fmCeCaret{}, and the probability map \fmTokens{} given by the model,
and produces a token \fmNextToken{} as the continuation of the program \fmProg{} at position \fmCeCaret{}.
The function consists of a \textbf{while}-loop (\Crefrange{algo:prune:loopStart}{algo:prune:return}) where
\tool{} first samples a possible next token according to the probabilities (\Cref{algo:prune:sample}),
updates the current program accordingly (\Cref{algo:prune:updateProg}),
and moves the \caret{} after \fmNextToken{}.
\tool{} then invokes the \cefull using the function \fmCeComplete{} defined in \Cref{eq:ce:getContext}, given the program $\fmCeProg'$ and the \caret{} position $\fmCeCaret'$.
If the result is not \fmCeNone{} but there is no completion (\Cref{algo:prune:testContext}),
it means that no possible continuation can be formed after \fmNextToken{},
so the token \fmNextToken{} is considered infeasible,
thus \emph{pruned} (\Cref{algo:prune:prune}) in this round of search,
and the loop will continue (\Cref{algo:prune:continue}).
Otherwise, we consider the token feasible and return \fmNextToken{} (\Cref{algo:prune:return}).

The pruning at \Cref{algo:prune:prune} is done by setting the probability of the entry \fmNextToken{} of the probability map \fmTokens{} to zero.
The notation used at this line is defined subsequently.
Assume that
\begin{equation}
\fmArbFunc\fmOfType\fmArbFuncType = \{\fmArbIn_0 \mapsto \fmArbOut_0, \fmArbIn_1 \mapsto \fmArbOut_1, \dots\}
\end{equation}
is an arbitrary function, and 
\begin{equation}
\fmArbPartialFunc{}\fmOfType\fmArbPartialFuncType = \{\fmArbIn'_0 \mapsto \fmArbOut'_0, \fmArbIn'_1 \mapsto \fmArbOut'_1, \dots\}
\end{equation}
is a \emph{partial function} of the same type,
meaning that only a subset of inputs in the domain \fmArbDom{} is associated with an output
in the range \fmArbRange{}.
We define the action of changing the output values of the inputs in \fmArbFunc{} using the assignments given by \fmArbPartialFunc{} as
\begin{equation}
\begin{aligned}
    \fmArbFunc\left[\fmArbPartialFunc\right] &=
    (\fmArbFunc - \fmRemoved) \cup
    \fmArbPartialFunc{} \\
    \text{where }\fmRemoved &= \{\fmArbIn' \mapsto \fmArbFunc(\fmArbIn') \mid
    \fmArbIn' \mapsto \fmArbOut' \in \fmArbPartialFunc\}.
\end{aligned}
\end{equation}

\begin{algorithm}
\caption{\coreApproachDef}
\label{algo:prune}

\algoSize
\begin{algorithmic}[1]
\fmAlgoInputs{\cefull{} \fmCe{}, Current Program \fmProg{}, Caret Position \fmCeCaret{}, and Token Probability Map \fmTokens{}.}
\fmAlgoOutput{Next token $\fmNextToken$ to generate.}
\algoSpacing

\Func{\fmPruneDecode}{\fmCe{}, \fmProg{}, \fmCeCaret{}, $\fmTokens\fmOfType\fmDecRepRhs{}$}{\fmPlmTok}
\While{\fmKwTrue} \label{algo:prune:loopStart}
    \State $\fmNextToken\fmOfType{}\fmPlmTok{} \gets \fmSample(\fmTokens)$ \label{algo:prune:sample}
    \State $\fmProg' \gets \fmProg\left[\fmCeCaret\leftarrow \fmConcat(\fmNextToken)\right]$ \label{algo:prune:updateProg}
    \State $\fmCeCaret' \gets \fmCeCaret + |\fmConcat(\fmNextToken)|$  \label{algo:prune:moveCaret}
    \LComment{$\fmCompletions\fmOfType\fmPow{\fmProgType} \cup \{\fmCeNone\}$}
    \State $\fmCompletions \gets \fmCe.\fmCeComplete(\fmProg', \fmCeCaret')$ \label{algo:prune:getContext}
    \If{$\fmCompletions \ne \fmCeNone$ \fmKwAnd{} $|\fmCompletions| = 0$} \label{algo:prune:testContext}
        \State $\fmTokens \gets \fmTokens\left[\{\fmNextToken \mapsto 0\right\}]$ \label{algo:prune:prune}
        \State \fmKwContinue{} \label{algo:prune:continue}
    \EndIf
    \State \Return{\fmNextToken} \label{algo:prune:return}
\EndWhile
\EndFunc
\end{algorithmic}
\end{algorithm}

\subsection{Memorization for Faster Search}
\label{subsec:memorize}
\Cref{algo:prune} (\fmPruneDecode{}) involves a loop of trials and pruning actions, which
slows down the repair task in some situation.
To speedup its search procedure, we apply several memorization techniques to
reduce the frequency of invoking the \cefull for analysis.

\paragraph{Memorizing rejected tokens}
To repair a bug in practice requires generating plenty of samples,
meaning that the same program $\fmProg'$ and $\fmCeCaret'$ (\Crefrange{algo:prune:updateProg}{algo:prune:moveCaret})
may occur repeatedly in \Cref{algo:prune} (\fmPruneDecode{}).
Therefore, we can memorize all the tokens pruned at \Cref{algo:prune:prune}
by storing them in a variable
\begin{equation}
\fmRejected\fmOfType (\fmProgType, \mathbb N) \to \fmPow{\fmPlmTok},
\end{equation}
which maps from a program \fmProg{} and a \caret{} position \fmCeCaret{} to a set of rejected tokens.
Then we zero the probabilities of the rejected tokens in advance,
written as
\begin{equation}
\fmTokens{}\gets\fmTokens\left[\{\fmToken \mapsto 0 \mid \fmToken \in \fmRejected(\fmProg, \fmCeCaret{})\}\right],
\end{equation}
before the \textbf{while}-loop (\Cref{algo:prune:loopStart}) starts.

\paragraph{Memorizing accepted tokens}
Besides rejected tokens, we can also memorize tokens that are accepted before in a variable
\begin{equation}
\fmAccepted\fmOfType (\fmProgType, \mathbb N)\to \fmPow{\fmPlmTok}
\end{equation}
to avoid the overhead incurred from querying the \cefull
at \Crefrange{algo:prune:getContext}{algo:prune:testContext}.

\paragraph{Building a Prefix Tree of Rejected Tokens}
It is common that many tokens in the vocabulary of the language model are prefixes of another.
And it is obvious that if a token is rejected,
meaning that no possible continuation can be formed after the token to obtain a statically valid program,
then any token sharing such prefix should be rejected.
For this reason, we build and keep updating a prefix tree, or Trie,
of all the rejected tokens given \fmProg{} and \fmCeCaret{},
and checks if any of the tokens in the Trie is a prefix of \fmNextToken{}
right after \Cref{algo:prune:sample} in \Cref{algo:prune}.
If it is the case, \tool{} directly skips to the next iteration, avoiding further analysis.

\subsection{Active Completion}
\label{subsec:active}
Not only is a \cefull able to determine the feasibility of a possible next token
suggested by the model, as shown in \Cref{subsec:prune},
but it can also \emph{proactively} suggest a potential continuation of the current program
without querying the model,
just like how developers benefit from autocompletion.

\Cref{algo:active} describes active completion in detail.
The function \fmActiveComplete{} takes three inputs:
the \cefull \fmCe{}, the current program \fmProg{}, and the current \caret{} position \fmCeCaret{},
and outputs a sequence of tokens \fmCompletionTokens{} as the continuation of \fmProg{} at \fmCeCaret{}.
Initially, \tool{} gets the completion result according to \Cref{eq:ce:getContext}, given \fmProg{} and \fmCeCaret{} (\Cref{algo:active:getContext}),
and checks if it is \fmCeNone{} (\Cref{algo:active:testContext}).
If it is the case ($\fmCompletions{} = \fmCeNone$), the result is set to an empty string,
meaning no extra completions are produced (\Cref{algo:active:default}).
Otherwise, \tool{}
calculates the \emph{common prefix} of all the completions (\Cref{algo:active:commonPrefix}).
Note that the type of the resultant variable \fmContinuation{} is a sequence of characters in the Programming Language alphabet,
different from the language model's $\fmPlmTok{}$,
so \tool{} further aligns the completion to fit the model's vocabulary (\Cref{algo:active:tokenize}). Finally, the result is returned at \Cref{algo:active:return}.
\begin{algorithm}
\caption{Active Completion}
\label{algo:active}

\algoSize
\begin{algorithmic}[1]
\fmAlgoInputs{\cefull{} \fmCe{}, Program \fmProg, and Caret Position \fmCeCaret.}
\fmAlgoOutput{The actively completed tokens \fmCompletionTokens{}.}
\algoSpacing

\Func{\fmActiveComplete{}}{\fmCe{}, \fmProg{}, \fmCeCaret{}}{\fmSeq{\fmPlmTok}}
\State $\fmCompletions\fmOfType\fmPow{\fmProgType} \cup \{\fmCeNone\}\gets \fmCe.\fmCeComplete(\fmProg, \fmCeCaret)$ \label{algo:active:getContext}
\If{$\fmCompletions = \fmCeNone$} \label{algo:active:testContext}
    \State $\fmCompletionTokens{} \gets \varepsilon$ \label{algo:active:default}
\Else
    \State $\fmContinuation\fmOfType\fmSeq{\fmCeAlphabet} \gets \fmCommonPrefix(\fmCompletions)$ \label{algo:active:commonPrefix}
    \State $\fmCompletionTokens\fmOfType{}\fmSeq{\fmPlmTok} \gets \fmTokenDecode(\fmPlmTok{}, \fmContinuation{})$ \label{algo:active:tokenize}
\EndIf
\State \Return{\fmCompletionTokens{}} \label{algo:active:return}
\EndFunc
\end{algorithmic}
\end{algorithm}

\subsection{Soundness of \tool{}}
In this section, we show the theoretical guarantee of each algorithm discussed above
under the condition that the \cefull is \emph{strict} (\Cref{def:strictCe}).

\begin{lemma}[Soundness of Pruning]
\label{thm:prune}
The tokens pruned away in \Cref{algo:prune} (\fmPruneDecode{}) result in infeasbile programs.
\end{lemma}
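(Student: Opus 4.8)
The plan is to prove the statement by contradiction: I will show that whenever \fmPruneDecode{} prunes a token at \Cref{algo:prune:prune}, the program $\fmProg'$ formed by appending that token cannot be feasible at the advanced \caret{} position $\fmCeCaret'$. The only ingredients are the pruning guard in \Cref{algo:prune}, the definition of static feasibility, and the strictness hypothesis of \Cref{def:strictCe}.

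First I would characterize exactly when a prune occurs. By \Cref{algo:prune:testContext}, the sampled token is discarded only when $\fmCe.\fmCeComplete(\fmProg', \fmCeCaret')$ returns a value $\fmCompletions$ with $\fmCompletions \ne \fmCeNone$ and $|\fmCompletions| = 0$; that is, $\fmCompletions = \varnothing$ while the engine is \emph{not} signalling \fmCeNone{}. The goal then reduces to establishing $\fmInfeasible{(\fmProg', \fmCeCaret')}$.

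Next I would assume, for contradiction, that $\fmFeasible{(\fmProg', \fmCeCaret')}$ holds. This is precisely the premise that activates \Cref{def:strictCe} for the pair $(\fmProg', \fmCeCaret')$: the program is feasible and its completion set is not \fmCeNone{}. I would then exploit the observation that $\fmPrefixSet(\varnothing) = \varnothing$, since the comprehension defining $\fmPrefixSet$ quantifies over $s \in \varnothing$ and is therefore vacuous. Consequently the side condition $\fmCompVar \not\in \fmPrefixSet(\fmCompletions)$ is met by \emph{every} continuation $\fmCompVar$, and strictness forces $\fmInfeasible{(\fmProg'\left[\fmCeCaret'\leftarrow\fmCompVar\right], \fmCeCaret' + |\fmCompVar|)}$ to hold for all $\fmCompVar$.

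Finally I would close the contradiction. Feasibility of $(\fmProg', \fmCeCaret')$ unfolds to a witness $\fmCont$ with $\fmStaticChecker(\fmProg'\left[\fmCeCaret'\leftarrow\fmCont\right])$; instantiating the universal statement above at $\fmCompVar = \fmCont$ and then specializing the resulting infeasibility to the empty continuation $\varepsilon$ contradicts this validity. The main obstacle I anticipate is conceptual rather than computational: \Cref{def:strictCe} is phrased as a guarantee that only fires once the program is \emph{already} feasible, so the argument must be arranged as a self-refuting assumption rather than a direct deduction. Care is then needed to align the two roles of ``completion'' in play --- the engine's empty suggestion set $\fmCompletions$ versus the existentially quantified continuation $\fmCont$ of the feasibility definition --- and to justify that $\fmPrefixSet(\varnothing)=\varnothing$ is what lets strictness quantify over all of $\fmProgType$.
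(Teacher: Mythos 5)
Your proposal is correct and follows essentially the same route as the paper's own discussion: both hinge on the observation that a non-\fmCeNone{} yet empty completion set, combined with strictness (\Cref{def:strictCe}), forces infeasibility of the program extended with the sampled token. In fact your version is the more careful one --- the paper asserts the implication directly, while you correctly note that \Cref{def:strictCe} only fires under a feasibility hypothesis and therefore arrange the argument as a contradiction via $\fmPrefixSet(\varnothing)=\varnothing$ and the empty continuation; this fills in a step the paper leaves implicit.
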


\begin{proof}
From \Cref{eq:strictCe} in \Cref{def:strictCe}, we can deduce that a program is infeasible at some \caret{} position
if the \cefull does not return \fmCeNone{} but the set of completions is empty, i.e.,
\begin{equation}
\begin{gathered}
|\fmCompletions| = 0 \to \fmInfeasible{(\fmProg{}, \fmCeCaret{})}\\
\text{if }\fmCompletions\ne\fmCeNone
\end{gathered}
\end{equation}
The pruning at \Cref{algo:prune} happens at \Crefrange{algo:prune:testContext}{algo:prune:prune},
which is exactly what is described above.
As a result, we can conclude that the program with \fmNextToken{} appended is infeasible, and
hence it is safe for \tool{} to abandon the token.
\end{proof}

\begin{lemma}[Soundness of Memorization]
\label{thm:mem}
The memorization discussed in \Cref{subsec:memorize} does not affect
\fmPruneDecode{}'s behavior.
\end{lemma}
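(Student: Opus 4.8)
The plan is to dispatch the three memorization devices of \Cref{subsec:memorize} one by one, all resting on a single observation: \fmCeComplete{} is a deterministic, side-effect-free function of its arguments, so the verdict it assigns to a fixed pair $(\fmProg, \fmCeCaret)$ — and hence whether a given sampled token is pruned at \Cref{algo:prune:prune} or returned at \Cref{algo:prune:return} — is fixed once and for all. Caching a verdict can therefore only save an engine call; it cannot manufacture a verdict different from the one \fmPruneDecode{} would have computed. I would organize the argument around where each device intervenes: the \emph{accepted-token} cache and the Trie leave the probability map untouched and only alter the post-sample control flow, so for them it suffices to couple the two runs on the \emph{same} drawn token and check the decision agrees; the \emph{rejected-token} cache edits \fmTokens{} before \fmSample{}, so for it I would instead argue that the induced distribution over accepted tokens is unchanged.

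For the \emph{accepted-token} cache $\fmAccepted$, a hit short-circuits \Crefrange{algo:prune:getContext}{algo:prune:testContext}; since the token was accepted before at this $(\fmProg, \fmCeCaret)$, determinism guarantees the skipped call to \fmCeComplete{} would again fail the pruning test, so returning immediately reproduces the branch at \Cref{algo:prune:return}. For the \emph{rejected-token} cache $\fmRejected$, the pre-loop assignment zeros exactly the tokens previously sampled and pruned here; by determinism each would be pruned again, so pre-zeroing merely spares \fmSample{} from drawing tokens the loop is bound to discard. I would then invoke the standard rejection-sampling fact that the first feasible token reached — with infeasible tokens removed one at a time upon being drawn — is returned with probability proportional to its original weight restricted to the feasible set, independently of the infeasible tokens' weights; pre-removing some of those infeasible tokens does not disturb this restricted distribution, so the returned-token distribution is identical.

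The Trie needs one extra, genuinely semantic ingredient: a \emph{prefix-monotonicity} property of infeasibility. I would prove that if $t$ was rejected at $(\fmProg, \fmCeCaret)$ — so that, by the deduction in the proof of \Cref{thm:prune}, $\fmProg[\fmCeCaret \leftarrow t]$ is infeasible at $\fmCeCaret + |t|$ — then every extension $t' = t \cdot s$ also yields an infeasible program. This is a one-liner from the existential definition of feasibility together with the substitution identity $\fmProg[\fmCeCaret \leftarrow t' \cdot \fmCompVar] = \fmProg[\fmCeCaret \leftarrow t][\fmCeCaret + |t| \leftarrow s \cdot \fmCompVar]$: any $\fmCompVar$ witnessing feasibility of $t'$ would witness feasibility of $t$ via the continuation $s \cdot \fmCompVar$, contradicting the rejection of $t$. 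Hence the Trie check placed right after \Cref{algo:prune:sample} discards only provably infeasible tokens, exactly as \Cref{thm:prune} demands of any pruning step.

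The step I expect to be the main obstacle is reconciling the Trie with the claim of strict \emph{behavioral} equivalence rather than mere soundness. The un-memorized \fmPruneDecode{} discards a token only under the condition of \Cref{algo:prune:testContext} — $\fmCompletions \ne \fmCeNone$ together with $|\fmCompletions| = 0$ — whereas the Trie discards $t'$ purely because it extends a rejected $t$. Closing the gap requires showing these two reject sets coincide, i.e., that a strict engine reporting a dead end (non-\fmCeNone{}, empty) at $\fmProg[\fmCeCaret \leftarrow t]$ also reports one at every extension $\fmProg[\fmCeCaret \leftarrow t']$ rather than retreating to \fmCeNone{}. I would discharge this from the incremental, error-tolerant character of the semantics-based engine assumed around \Cref{def:strictCe} — once a position admits no valid completion, typing further characters keeps it a dead end — or, absent a usable monotonicity hypothesis on the engine, weaken the target to the soundness reading (no feasible token is ever pruned, and every returned token is one the un-memorized loop could return), which the prefix-monotonicity lemma already secures.
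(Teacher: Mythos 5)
Your proposal is correct in substance but takes a genuinely different --- and far more rigorous --- route than the paper, whose entire proof is the one-sentence assertion that the memorization techniques ``do not change the semantics of \fmPruneDecode{} but only speed up the process.'' You decompose the claim by device and supply the actual arguments: determinism of \fmCeComplete{} for the accepted-token cache, the rejection-sampling invariance of the restricted distribution for the pre-zeroing of \fmRejected{}, and a prefix-monotonicity lemma (any witness of feasibility after an extension $t\cdot s$ yields a witness after $t$) for the Trie. That last piece is a real mathematical ingredient the paper never states, and your ``main obstacle'' paragraph puts a finger on something the paper silently elides: the Trie prunes every extension of a rejected token, whereas the unmemoized loop at \Cref{algo:prune:testContext} would only prune it if the engine again returned a non-\fmCeNone{} empty set rather than retreating to \fmCeNone{}; so literal behavioral equivalence for the Trie needs an extra monotonicity hypothesis on the engine, and absent that only the soundness reading (no feasible token is ever pruned) survives --- which is all that the downstream overall-soundness theorem actually uses. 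In short, your version buys a precise statement of what each cache preserves and exposes that the lemma as worded is slightly too strong for the Trie; the paper's version buys brevity at the cost of glossing over exactly that point.
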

\begin{proof}
The theorem holds because all the memorization techniques mentioned in \Cref{subsec:memorize} do not
change the semantics of \fmPruneDecode{} but only speed up the process.
\end{proof}

\begin{lemma}[Soundness of Active Completion]
\label{thm:active}
If a program is feasible at some \caret{} position, the new program
produced by \Cref{algo:active} (\fmActiveComplete{})
is feasible at its new caret position.
\end{lemma}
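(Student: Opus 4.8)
The plan is to unfold the definition of static feasibility and then split on the value of $\fmCompletions = \fmCe.\fmCeComplete(\fmProg, \fmCeCaret)$, mirroring the two branches of \Cref{algo:active}. By hypothesis $\fmFeasible{(\fmProg, \fmCeCaret)}$, so I fix a witness continuation $\fmCont \in \fmProgType$ with $\fmStaticChecker(\fmProg\left[\fmCeCaret\leftarrow\fmCont\right])$. First I would note that the alignment step $\fmTokenDecode$ (\Cref{algo:active:tokenize}) only re-expresses the character string $\fmContinuation$ in the model vocabulary $\fmPlmTok$ without altering the underlying string, so it suffices to argue that appending the string $\fmContinuation$ at $\fmCeCaret$ preserves feasibility, the new caret being $\fmCeCaret + |\fmContinuation|$.

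For the easy branch, if $\fmCompletions = \fmCeNone$ then \fmActiveComplete{} returns $\varepsilon$, leaving the program and caret unchanged, so feasibility is inherited directly from the hypothesis.

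The substantive branch is $\fmCompletions \ne \fmCeNone$, where $\fmContinuation = \fmCommonPrefix(\fmCompletions)$. Here the key claim I would establish is that $\fmContinuation$ is a prefix of the witness $\fmCont$. To derive this from strictness (\Cref{def:strictCe}), I use the contrapositive of \Cref{eq:strictCe}: any single-shot continuation whose appended program is feasible must lie in $\fmPrefixSet(\fmCompletions)$. Every prefix $p$ of $\fmCont$ yields a feasible program at caret $\fmCeCaret + |p|$ — complete it with the remaining suffix of $\fmCont$ to recover the valid program $\fmProg\left[\fmCeCaret\leftarrow\fmCont\right]$ — hence every prefix of $\fmCont$ belongs to $\fmPrefixSet(\fmCompletions)$. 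Combining this with the fact that $\fmContinuation$ is a common prefix of every element of $\fmCompletions$, I would rule out the possibility that $\fmCont$ and $\fmContinuation$ diverge before either ends: a first disagreement at index $k < |\fmContinuation|$ would make the length-$(k{+}1)$ prefix of $\fmCont$ incompatible with every completion (each completion agrees with $\fmContinuation$ at position $k$, but that prefix does not), contradicting its membership in $\fmPrefixSet(\fmCompletions)$. Thus $\fmContinuation$ and $\fmCont$ are prefix-comparable, and because $\fmContinuation$ is forced by all completions it is a prefix of $\fmCont$; appending to $\fmProg\left[\fmCeCaret\leftarrow\fmContinuation\right]$ the remaining suffix of $\fmCont$ then reconstructs the valid program, witnessing $\fmFeasible{(\fmProg\left[\fmCeCaret\leftarrow\fmContinuation\right], \fmCeCaret + |\fmContinuation|)}$.

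The main obstacle I anticipate is pinning down the prefix ordering between the forced common prefix $\fmContinuation$ and the witness $\fmCont$. Ruling out genuine divergence is clean via the prefix-set argument above, but the delicate sub-case is when the witness $\fmCont$ is itself a proper prefix of $\fmContinuation$ (and of some completion): strictness only forbids continuations outside $\fmPrefixSet(\fmCompletions)$, so on its own it does not prevent the engine from suggesting something that "overshoots" a continuation that is already valid. I expect to close this by the convention that a strict engine's suggestions are themselves feasible continuations, so that no forced common prefix extends past a complete valid program; this compatibility step, rather than the two routine suffix-appends, is where the real care is needed.
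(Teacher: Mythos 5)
Your proof follows the same route as the paper's --- the contrapositive of \Cref{eq:strictCe} plus a common-prefix argument --- but is considerably more careful, and the sub-case you flag at the end is a genuine soft spot rather than an artifact of your formalization. The paper's entire argument is that any continuation not matching \fmCompletions{} is infeasible, hence ``completing the original program with the common prefix is the only way to yield a new feasible program''; this establishes at most \emph{necessity} (every feasible continuation is prefix-comparable with $\fmContinuation$), whereas the lemma needs \emph{sufficiency} (that $\fmProg\left[\fmCeCaret\leftarrow\fmContinuation\right]$ is itself feasible at the advanced \caret{}). Your derivation supplies the missing link in the main case: every prefix of the witness \fmCont{} lies in $\fmPrefixSet(\fmCompletions)$, so \fmCont{} and $\fmContinuation$ cannot properly diverge, and when $\fmContinuation$ is a prefix of \fmCont{} the remaining suffix of \fmCont{} witnesses feasibility of the completed program. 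The overshoot case you isolate --- \fmCont{} a proper prefix of $\fmContinuation$, for instance when $\fmProg$ is already statically valid at \fmCeCaret{} yet the engine suggests a strictly longer common prefix --- is not excluded by \Cref{def:strictCe}, which constrains only continuations \emph{outside} $\fmPrefixSet(\fmCompletions)$ and says nothing about whether the suggestions themselves lead anywhere. So the lemma as stated really does need the additional convention you invoke (that the engine's suggested completions are themselves feasible continuations); the paper assumes this tacitly. In short, your argument is at least as strong as the paper's, and the residual gap you honestly identify is a gap in the paper's own proof as well.
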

\begin{proof}
Based on \Cref{eq:strictCe} from \Cref{def:strictCe}, any continuations not matching the set of completions would bring about an infeasible program. In the case where these completions have a shared common prefix, any continuations not starting with this common prefix would be invalid. Therefore, completing the  original program with the common prefix (\Cref{algo:active:commonPrefix} in \Cref{algo:active}) is the only way to yield a new feasible program.
\end{proof}

On the basis of \Crefrange{thm:prune}{thm:active}, we can easily prove that \tool{}'s overall
algorithm is sound.
\begin{theorem}[Overall Soundness]
\Cref{algo:overall} (\fmRepair) does not miss any feasible programs in the language model's
search space.
\end{theorem}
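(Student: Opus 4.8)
The plan is to establish this as a completeness property---dual to the soundness lemmas above---by induction on the iterations of the main loop in \fmRepair{} (\Cref{algo:overall}). Fix an arbitrary target hunk $p \in \fmProgType{}$ that lies in \fmPlm{}'s search space (i.e.\ is assigned positive probability along some sequence of token choices) and whose insertion leaves the patch feasible at the corresponding \caret{} position. The goal is to exhibit a run of the loop---a particular sequence of outcomes of \fmSample{} inside \fmPruneDecode{}---whose accumulated generation \fmGen{} equals $p$, thereby showing that no pruning or completion step can remove $p$ from the reachable outputs.

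For the base case, before the loop $\fmGen = \varepsilon$, so $p$ trivially extends the current (empty) generation and is feasible by assumption. For the inductive step, I would maintain the invariant that the characters of \fmGen{} form a prefix of $p$ and that $p$ remains feasible, then show each loop body preserves it. The pruning phase (\Cref{algo:overall:prunedSearch}, calling \fmPruneDecode{}) is handled by \Cref{thm:prune}: every token it removes yields an infeasible program. Because the next token of $p$ after \fmGen{} keeps the generation a prefix of the feasible $p$, that token is never pruned and therefore remains samplable and acceptable at \Cref{algo:prune:return}; \Cref{thm:mem} lets me disregard the memorization layer since it does not change this behavior.

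Next I would treat the active-completion phase (\Cref{algo:overall:activeComplete}, calling \fmActiveComplete{}) via \Cref{thm:active}. The key point is its converse reading: when the completion set is not \fmCeNone{}, its common prefix is the \emph{only} feasible way to continue, so every feasible continuation of $p$---and $p$ is feasible by hypothesis---must itself begin with this common prefix. Appending it thus keeps \fmGen{} a prefix of $p$ and is a \emph{forced} move rather than a restrictive one, so it cannot skip past $p$. Combining the two phases preserves the invariant across one iteration; once the remaining suffix of $p$ is exhausted the model emits \fmEndToken{} and the patch is returned at \Cref{algo:overall:return}. Induction then yields that $p$ is reachable, which is exactly the statement.

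I expect the main obstacle to be reconciling the token-level granularity of \fmPlm{} with the character-level feasibility relation $\fmFeasibleMark\fmStaticChecker{}$: \fmGen{} advances by whole tokens, and active completion may append several tokens at once through \fmTokenDecode{} (\Cref{algo:active:tokenize}), whereas the common-prefix argument is phrased over character sequences in \fmSeq{\fmCeAlphabet}. The delicate step is arguing that committing to the entire common prefix in one shot cannot bypass any feasible program the model could otherwise have assembled token by token; this is precisely where the ``only way'' clause of \Cref{thm:active} is indispensable, together with the fact that \fmTokenDecode{} realigns the common prefix to the vocabulary without inserting or deleting characters.
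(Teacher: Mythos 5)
Your proposal is correct and follows essentially the same route as the paper, which offers no explicit proof of this theorem beyond asserting that it follows from \Cref{thm:prune}, \Cref{thm:mem}, and \Cref{thm:active}; your induction over loop iterations, invoking each lemma for the corresponding phase, is the natural way to make that assertion precise. The token-versus-character granularity issue you flag at the end (whether committing to the full common prefix can overshoot a feasible target whose remaining hunk is a strict prefix of it) is a genuine subtlety that the paper itself silently delegates to the ``only way'' claim in its discussion of \Cref{thm:active}, so you are no worse off than the paper on that point.
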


\paragraph{When will \tool{} fail?}
Although the theorems are about the soundness of \tool{}, i.e., it \emph{prunes the search space correctly},
it does not provides any guarantee that \tool{} \emph{produces a valid patch} every time.
Therefore, \tool{}'s expected behavior is to be able to obtain valid patches more efficiently,
rather than being entirely error-free during the generation.

\section{Experimental Setup}

In this paper, we study the following research questions to evaluate \tool{}.
\begin{itemize}
    \item \textbf{RQ1:}  How does \tool{}'s bug fixing capability compare with state-of-the-art \apr techniques (\Cref{subsec:resultComparison})?
    \item \textbf{RQ2:}  How effective is \tool{} in improving the compilation rate of patch generation (\Cref{subsec:patchEfficiency})?
    \item \textbf{RQ3:}  Are all components of \tool{} making positive contributions to its effectiveness (\Cref{subsec:ablation})?
    \item \textbf{RQ4:} Can \tool generalize to different subjects of bugs and models (\Cref{subsec:generalizability})?
\end{itemize}

We first compare the repair performance of \tool{}, instantiated with \codetf, against state-of-the-art \apr tools across both traditional, \nmt{}-based, and \llm-based tools on the \dfj datasets in RQ1. 
In RQ2, we then closely evaluate the improvement in compilation rate --- percentage of compilable patches generated to demonstrate that \tool is not only effective in bug repair but can generate a higher number of compilable patches compared with existing tools. Furthermore, we perform a detailed ablation study in RQ3 to evaluate the contribution of different components of \tool.
Finally, in RQ4, we extend our evaluation of \tool beyond its use with \codetf in the previous RQs. We go a step further by implementing \tool with \incoder and assessing the performance of \tool using both \codetf and \incoder on single-hunk bugs from both \dfj 1.2 and 2.0 to demonstrate the generalizability of \tool across different \llm{s} and bug subjects.

\subsection{Implementation}
\label{sec:impl}
We use the Python implementation of the \codetf-large and the \incoder-6.7B models obtained on Hugging Face~\cite{HuggingFaceWebPage}. 
We build our generation pipeline in Python with 5K lines of code and
implement a modified version of the Eclipse JDT Language Server~\cite{eclipseLS,modifiedJDT} in Java with 1.5K additional lines of code, which serves as the \emph{strict} \cefull of our framework.
Our default generation uses top-p nucleus sampling~\cite{holtzman2019nucleus} with $p=1.0$, $\mathit{temperature} = 1.0$, $max\fmHyphen tokens = 50$
and samples 5000 times per bug for fair comparisons against prior works (\Cref{subsec:resultComparison} and \Cref{subsec:patchEfficiency}). Due to the high cost of \apr, we sample 500 times per bug for the ablation study (\Cref{subsec:ablation}) and the generalizability evaluation (\Cref{subsec:generalizability}). 
Following prior work~\cite{lutellier2020coconut, zhu2021recoder, li2020dlfix, xia2022alpharepair}, we use a timeout of 5 hours to generate and validate all patches per bug.
We generate and validate patches on a 32-Core with Ryzen Threadripper PRO 3975WX CPU, 256 GB RAM and NVIDIA RTX A6000 GPU, running Ubuntu 20.04.4 LTS with Java version OpenJDK 1.8.0\_181. 

\subsection{Subject Programs}

We use the popular repair benchmark of \dfj{}
for our evaluation. \dfj is a manually curated Java dataset with pairs of buggy and patched versions of the source project along with developer test suites for validation. Following prior work and \apr literature convention, we separate \dfj into \dfj 1.2, containing 391 bugs (removing 4 depreciated bugs) from 6 Java source projects, and \dfj 2.0, containing 438 new bugs from 9 additional projects. For \dfj 1.2, we focus on only the single-hunk bugs as \tool is designed for single-hunk repair. Note this is also the evaluation setting used in the prior baseline~\cite{ye2022rewardrepair}. Furthermore, we remove the bugs that are incompatible with our \cefull due to engineering issues. In total, we consider 138 single-hunk bugs from \dfj 1.2 and 135 single-hunk bugs from \dfj 2.0.
For our main evaluation in RQ1, following the same setup as prior \llm for \apr work~\cite{xia2022alpharepair, xia2022repairstudy}, we report the results on all 135 single-hunk bugs from \dfj 1.2 and 76 single-line bugs (a subset of single-hunk bugs) from \dfj 2.0. Meanwhile, in our generalizability study (RQ4), we further evaluate \tool on the full set of single-hunk bugs from both \dfj 1.2 and 2.0 for both \codetf and \incoder.

\subsection{Compared Techniques}

We compare \tool against state-of-the-art baselines from traditional, \nmt{}-based, and \llm for \apr tools. We evaluate against \alpharepair~\cite{xia2022alpharepair} as it is the top performing \llm for \apr approach. For \nmt{}-based approaches, we choose 6 recent tools: \rewardrepair~\cite{ye2022rewardrepair}, \recoder~\cite{zhu2021recoder}, \cure~\cite{jiang2021cure}, \coconut~\cite{lutellier2020coconut}, \dlfix~\cite{li2020dlfix} and \sequencer~\cite{chen2018sequencer} based on the \nmt architecture. Additionally, we compare against 12 traditional \apr tools: \prapr~\cite{ghanbari2019prapr}, \tbar~\cite{liu2019tbar}, \avatar~\cite{liu2019avatar}, \simfix~\cite{jiang2018simfix}, \fixminer~\cite{koyuncu2020fixminder}, \capgen~\cite{wen2018capgen}, \jaid~\cite{chen2017jaid}, \sketchfix~\cite{hua2018sketchfix}, \nopol~\cite{demacro2014nopol}, \jgenprog~\cite{martinez2015automatic}, \jmutrepair~\cite{martinez2016astor} and \jkali~\cite{martinez2016astor}. Altogether, we include 19 \apr baselines and compare \tool against them on \dfj 1.2 and 2.0. Our evaluation setting is on perfect fault localization -- where the ground-truth location of the bug is given to the \apr tool. We note that this is the preferred evaluation setting as it eliminates any differences caused by different fault localization methods~\cite{lutellier2020coconut, jiang2021cure, zhu2021recoder, tufano2018empstudy}. We follow the convention used in prior work~\cite{zhu2021recoder, jiang2021cure, liu2019tbar, ghanbari2019prapr} and directly report the bug fix results obtained from previous studies~\cite{xia2022alpharepair, ghanbari2019prapr, ye2022rewardrepair}. 

\subsection{Evaluation Metrics} 
\begin{itemize}
\item\textbf{Plausible patches} are patches that pass all test cases but may violate
the real user intent.
\item\textbf{Correct patches} are patches that are semantically equivalent to the developer patch.
Following common \apr practice, we determine semantic equivalency by manually examining each plausible patch.
\item\textbf{Patch compilation rate} is also used in many deep learning based \apr works~\cite{jiang2021cure, ye2022rewardrepair}, which indicates the percentage of compilable patches in all generated patches.
\end{itemize}

\section{Result Analysis}
\label{sec:result}

\subsection{RQ1: Comparison with Existing Tools}
\label{subsec:resultComparison}

In RQ1 and RQ2, we follow the prior approach for cloze-style \apr~\cite{xia2022alpharepair} to make use of repair templates for a faithful evaluation. Instead of replacing the entire buggy line with model-generated code, these templates systematically keep parts of the buggy line
to reduce the amount of code the \llm needs to generate. Note that we do not apply any repair templates in RQ3 and RQ4 because we consider a smaller number of samples there (i.e., 500 samples as shown in Section~\ref{sec:impl}) and also want to focus on the impact of different experimental configurations.

\begin{table}
\small
 \caption{Number of correct fixes on \dfj 1.2 single-hunk and \dfj 2.0 single-line bugs}
        \centering
        \label{tab:dfj12_evaluation}
        \begin{booktabs}{colspec={@{}llrrr@{}}}
        \toprule
        \SetCell[r=2]{m} Tool &\SetCell[r=2]{m} Methodology & \SetCell[c=3]{c}\#Correct Fixes\\
        \cmidrule[lr]{3-5}
        & & {\dfj 1.2} & {\dfj 2.0} & {Total}\\
        \midrule
        \coconut &\nmt{} & 30& - & -\\
        \dlfix &\nmt{} & 32 & - & -\\
        \prapr &Template & 35 & - & -\\
        \tbar &Template & 41 & 7 & 48\\
        \cure &\nmt{} & 43 & - & -\\
        \rewardrepair &\nmt{} & 45 & 24 & 69\\
        \recoder &\nmt{} & 51 & 10 & 61\\
        \alpharepair &\llm & 52 & 34 & 86\\
        \midrule[dotted]
        \tool &\llm & \textbf{66} & \textbf{50} & \textbf{116}\\
        \bottomrule
    \end{booktabs}
\end{table}

\begin{figure}
\centering
\includegraphics[width=\linewidth]{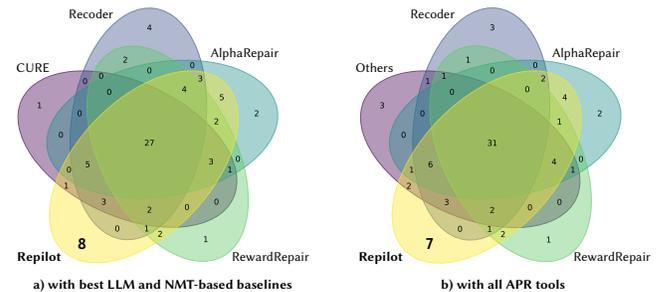}
\caption{Correct fix Venn diagrams on \dfj 1.2}
\label{fig:venn_dfj12}
\end{figure}

\parabf{\dfj 1.2.} We first compare \tool against the state-of-the-art \apr tools on single-hunk bugs from \dfj 1.2. \Cref{tab:dfj12_evaluation} shows the number of correct patches produced by \tool, evaluated in cloze-style, along with the baselines. \emph{\tool achieves the new state-of-the-art result of 66 correct bug fixes on \dfj 1.2, outperforming all previous \apr tools}. Figure~\ref{fig:venn_dfj12}a shows the Venn diagram of the unique bugs fixed for the top performing \llm- and \nmt{}-based
\apr tools where \tool is able to obtain the highest number of 8 unique bugs
Furthermore, Figure~\ref{fig:venn_dfj12}b compares the unique bugs fixed for all top-performing baselines and with all other \apr tools combined (Others). We observe that \tool is able to fix 7 bugs
that no other baselines have been able to fix so far.

\begin{figure}
\centering
\includegraphics[width=0.9\linewidth]{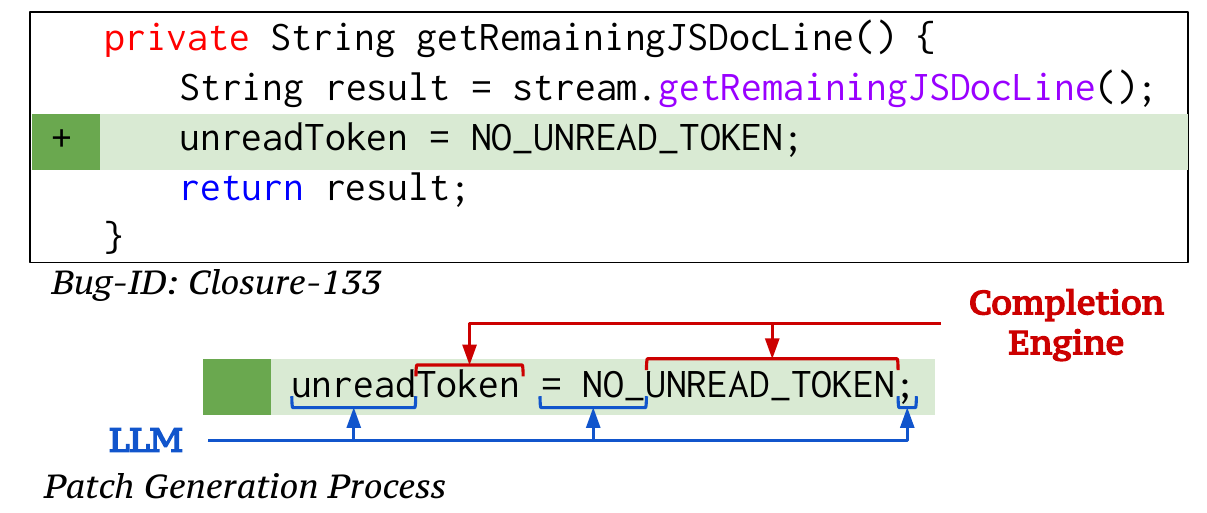}
\caption{Unique bug fix by \tool on \dfj 1.2}
\label{fig:unique_fix_dfj12}
\end{figure}

To demonstrate the ability of \tool to fix difficult bugs, Figure~\ref{fig:unique_fix_dfj12} shows a unique bug (\CodeIn{Closure-133}) from \dfj 1.2 that only \tool can fix. This bug is fixed by adding the new assignment statement using the global variable \CodeIn{NO\_UNREAD\_TOKEN} which is difficult to generate as it does not appear within the surrounding context of the bug location. \tool first uses \codetf to generate the initial prefix of \CodeIn{unread}. Then using the \cefull, \tool recognizes that \CodeIn{Token} is the only semantically correct continuation and directly performs active completion to return \CodeIn{unreadToken}. Similarly for generating \CodeIn{NO\_UNREAD\_TOKEN}, \tool first generates \CodeIn{NO\_} and then uses active completion to directly generate this rare identifier without having to repeatedly sample the \llm. It is difficult for prior \llm{-} and \nmt{}-based \apr tools to generate this fix as \llm{s} or \nmt models may not be able to complete this rare identifier since it requires multiple continuous steps to generate. In contrast, \tool, through the use of active completion, can directly generate this rare identifier given only the initial identifier prefix to quickly arrive at this correct patch. 

\parabf{\dfj 2.0.} We further evaluate \tool against baselines evaluated on the single-line bugs in \dfj 2.0. For these bugs, we follow prior approach for cloze-style \apr~\cite{xia2022alpharepair} to make use of repair templates. Instead of replacing the entire buggy line with model-generated code, these templates systematically keep parts of the buggy line (e.g., prefix or suffix, method parameters and calls) to reduce the amount of code the \llm needs to generate. We apply these repair templates for \dfj 2.0 single-line bugs only since they are designed for single-line bugs. \Cref{tab:dfj12_evaluation} also shows the number of correct fixes on \dfj 2.0 compared with the baselines. We observe that \emph{\tool is able to fix the highest number of bugs 50 (16 more than the next best baseline) on \dfj 2.0.} This improvement over existing baselines shows that \tool can generalize to two versions of \dfj datasets and demonstrates the power of repair templates to boost the performance of \llm-based \apr tools. 

Figure~\ref{fig:unique_fix_dfj2} shows a unique bug from \dfj 2.0 that only \tool can fix. First, \tool generates the patch up to the caret position. The \cefull then captures the exact type of the object from \CodeIn{Token.EndTag} to \CodeIn{String}. Using this information, \tool{} correctly prunes tokens that are not a part of the \CodeIn{String} class (e.g., \CodeIn{name} and \CodeIn{text}). Hence, the generated patch contains a valid \CodeIn{String} class method of \CodeIn{toLowerCase()} which correctly fixes this bug. Similar to the previous unique bug fix in \dfj 1.2, prior \llm-based \apr tools may waste a lot of time generating semantically incorrect continuations as they do not have access to the type information. Furthermore, \nmt{}-based \apr tools such as \cure~\cite{jiang2021cure}, over-approximating the list of valid identifiers by statically grabbing all the accessible fields, may not generate this fix since a pruned identifier (e.g., \CodeIn{name}) can also be valid for a different object type. \tool uses the \cefull to analyze partial programs and realize complex type propagation for effective pruning.

\begin{figure}
\centering
\includegraphics[width=0.9\linewidth]{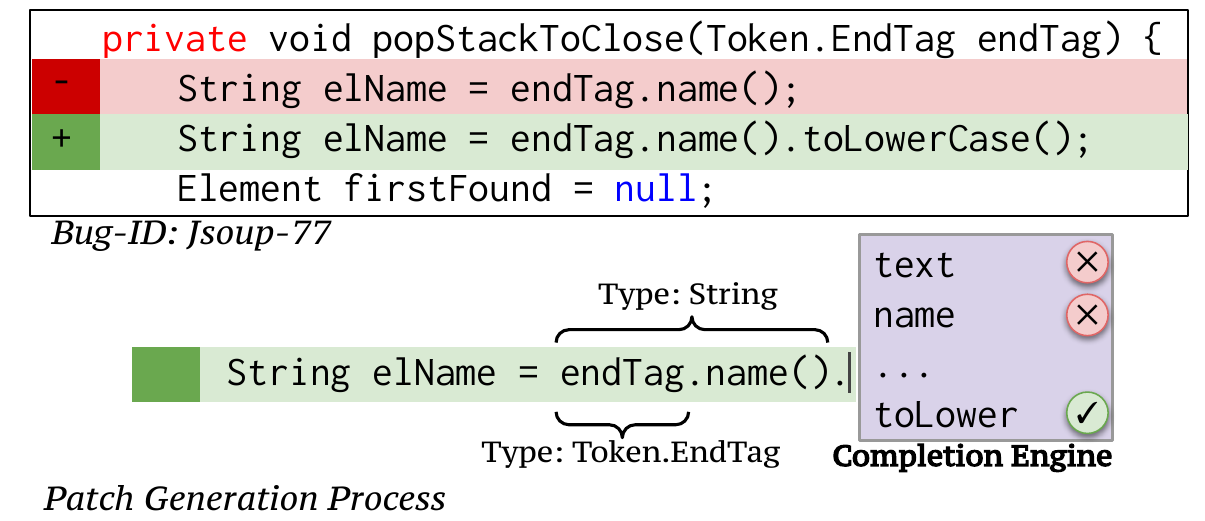}
\caption{Unique bug fix by \tool on \dfj 2.0}
\label{fig:unique_fix_dfj2}
\end{figure}

\subsection{RQ2: Compilation Rate Analysis}
\label{subsec:patchEfficiency}

\begin{table}
 \small
 \caption{Comparison with existing \apr tools on compilation rate on \dfj 1.2. "-" denotes data not available.}
        \centering
        \label{tab:compilation_rates}
        \begin{booktabs}{
            colspec={@{}lrrrr@{}},
            cell{Z}{2-Z} = {font=\bfseries},
            cell{1}{1} = {r=2}{m},
            cell{1}{2} = {c=4}{c}
        }
        \toprule
        Tool & \% Compilable Patches\\
        \cmidrule[lr]{2-5}
        & Top-30 & Top-100 & Top-1000 & Top-5000 \\
        \midrule
        \sequencer & 33\% & - & - & - \\
        \coconut & 24\% & 15\% & 6\% & 3\% \\
        \cure & 39\% & 28\% & 14\% & 9\% \\ 
        \alpharepair & 25\% & 22\% & 16\% & 13\% \\
        \rewardrepair & 45\% & 38\% & 33\%\(^1\) & - \\
        \midrule[dotted]
        \tool & 66\% & 62\% & 58\% & 59\% \\
        \bottomrule
    \end{booktabs}
    \begin{tablenotes}
    \item[*] \({}^{1}\) {\footnotesize This is the top 200 rate for \rewardrepair as it does not include top 1000}
    \end{tablenotes}
\end{table}

We evaluate the compilation rate of the patches generated by \tool compared with prior learning-based \apr techniques. Table~\ref{tab:compilation_rates} shows the percentage of compilable patches on the \dfj 1.2 dataset. We observe that across all numbers of patches generated, \tool significantly improves the percentage of compilable patches compared with prior tools. We first notice that \llm-based \apr tools (\tool and \alpharepair), are able to sustain their compilation rate compared with \nmt{}-based tools (\coconut and \cure) where the compilation rate drastically decreases as we increase the number of patches. This shows the ability for \llm{s} to generate large amounts of reasonable patches. \tool is able to sustain a near 60\% compilation percentage at 1000 patches generated while the prior approach is barely above 30\%. 

Compared with \cure~\cite{jiang2021cure}, where an overestimation of valid identifiers is obtained via static analysis and used to prune invalid tokens generated by \nmt model, \tool leverages the powerful \cefull to keep track of the current context to obtain a more accurate pruning step. Furthermore, compared with \rewardrepair~\cite{ye2022rewardrepair}, where the compilation rate is boosted through penalizing uncompilable patches during training, \tool directly uses a \llm combined with a \cefull to avoid this high cost of training a new model. Additionally, \tool uses the active completion ability of \cefull to directly generate these rare identifiers to further boost the compilation rate. As such, \tool is able to achieve the highest percentage of compilable patches
across all four different settings.

\subsection{RQ3: Ablation Study}
\label{subsec:ablation}
\begin{table}
\small
 \caption{Component contribution of \tool}
        \centering
        \label{tab:ablation}
        \begin{booktabs}[expand=\changescolor]{
            colsep=2.5pt,
            colspec={@{}lrrrrr@{}},
            column{Z}={fg=\changescolor},
            cell{Z}{2-Z} = {font=\bfseries}
        }
        \toprule
        Variant & {Generation\\Time} & {\%Compilable\\Patches} & {\%Plausible\\Patches} & {\#Plausible\\Fixes} & {\#Correct\\Fixes}\\
        \midrule
        \abvanilla{} & 0.232s & 43.2\% & 3.95\% & 56 & 37 \\
        \midrule[dotted]
        \abnomem{}  & 0.294s & 60.7\% & 5.02\% & 62 & 41 \\
        \abmem{} & 0.255s & 58.7\% & 4.82\% & 60 & 40 \\
        \abactive{} & 0.248s & 63.4\% & 5.21\% & 63 & 42 \\
        \bottomrule
    \end{booktabs}
\end{table}

To study the contribution of each component of \tool{} to its overall effectiveness,
we conduct an ablation study that aims at justifying the following hypothesis:
\begin{itemize}
    \item \Cref{algo:prune} (\fmPruneDecode) helps \llm
    to achieve valid (compilable) patches more efficiently on a pruned search space.
    \item Memorization (\Cref{subsec:memorize}) reduces the frequency of
    querying the \cefull, thus speeding up patch synthesis.
    \item Active completion provides further guidance of synthesis that and helps \tool{} efficiently achieve more valid patches.
    \item The plausible rate of patches becomes higher along with the compilation rate.
\end{itemize}

To give grounds for these hypotheses, we
set up the following four variants:
\begin{itemize}
\item \abvanilla{} uses only the base \llm (\codetf) for patch synthesis. 
\item \abnomem{} applies pruning defined in \Cref{algo:prune}.
\item \abmem{} leverages memorization (\Cref{subsec:memorize}) on top of pruning.
\item \abactive{} employs active completion for further guidance.
\end{itemize}
and evaluate them by comparing them against their efficiency in generating compilable, plausible patches, and correct patches.

\Cref{tab:ablation} shows the generation time (in seconds per patch), the contribution in terms of the percentage of compilable and plausible patches among all uniquely generated patches, the number of plausible fixes, and the number of correct fixes for each of the four variants on \dfj 1.2 single-hunk bugs.
We first observe that just using the base \llm for \apr (\abvanilla{}), we achieve the lowest compilation rate at 43.2\%. By adding the pruning provided by the \cefull, we can significantly improve the compilation rate to 60.7\%, the number of plausible fixes from 56 to 62, and the number of correct fixes from 37 to 41. Additional improvement is made by adding the active completion technique to achieve the full \tool with the highest compilation rate at 63.4\%, plausible percentage 5.21\%, the most number of plausible fixes at 63, and the most correct fixes at 42.

Looking at the patch generation time, starting from \abvanilla{}, adding pruning via \cefull incurs an over 25\% overhead. However, this can be significantly reduced by using memorization (\abnomem{}) to achieve around 10\% overhead by avoiding querying the \cefull once we know an identifier is invalid. Furthermore, active completion can further reduce the overhead to 7\% since instead of having to sample the \llm for each step in the generation, we can actively complete an identifier.

As a result, all the components contribute to the overall effectiveness of \tool. \tool can consistently increase the compilation and plausible rate, as well as produce more plausible/correct fixes while incurring minimal overhead compared with directly using \llm{s} for patch synthesis.

\color{\changescolor}
\subsection{RQ4: Generalizability}
\label{subsec:generalizability}

\begin{table*}[t]
 \small
    \caption{Generalizability of \tool across both subjects of bugs and models}
        \centering
        \label{tab:generalizability}
        \begin{booktabs}[expand={\changescolor}]{
          colspec={@{}lllrrrrr@{}},
          cell{3,5,7,9}{5-8} = {font=\bfseries},
          cell{2,4,6,7,8}{4} = {font=\bfseries},
        }
        \toprule[\changescolor]
        Variant & Model & Subject of Bugs & Generation Time & \%Compilable Patches & \%Plausible Patches & \#Plausible Fixes & \#Correct Fixes \\
        \midrule[\changescolor]
        \abvanilla{} & \codetf{}-large & \dfj 1.2 & 0.232s & 43.2\% & 3.95\% & 56 & 37 \\
        \abactive{} & \codetf{}-large & \dfj 1.2 & 0.248s & 63.4\% & 5.21\% & 63 & 42 \\
        \midrule[dotted, \changescolor]
        \abvanilla{} & \codetf{}-large & \dfj 2.0 & 0.230s & 46.7\% & 9.02\% & 59 & 41 \\
        \abactive{} & \codetf{}-large & \dfj 2.0 & 0.247s & 64.8\% & 12.02\% & 65 & 45 \\
        \midrule[\changescolor]
        \abvanilla{} & \incoder{}-6.7B & \dfj 1.2 & 1.70s & 32.4\% & 3.85\% & 70 & 48\\
        \abactive{} & \incoder{}-6.7B & \dfj 1.2 & 1.70s & 47.2\% & 4.96\% & 78 & 54 \\
        \midrule[dotted, \changescolor]
        \abvanilla{} & \incoder{}-6.7B & \dfj 2.0 & 1.67s & 34.6\% & 5.06\% & 67 & 45\\
        \abactive{} & \incoder{}-6.7B & \dfj 2.0 & 1.69s & 48.0\% & 6.87\% & 68 & 46 \\
        \bottomrule[\changescolor]
    \end{booktabs}
\end{table*}
\color{black}

To demonstrate the generalizability of \tool across different subjects of bugs and models, on the one hand, we further evaluate \tool with \codetf on all single-hunk bugs of \dfj 2.0. On the other hand, we additionally instantiate and evaluate \tool with a larger \incoder{}-6.7B model. Identical to RQ3, we also conduct 500 samples in RQ4 due to the high cost of \apr.

\Cref{tab:generalizability} shows the comparison between the baseline \abvanilla{} and our full \tool approach across different subjects of bugs and models. We consider the same set of \dfj 1.2 single-hunk bugs as in RQ3 and an extra set of \dfj 2.0 single-hunk bugs.

Upon investigation, we can see that \tool with \codetf surpasses the baseline on \dfj 1.2 as illustrated in RQ3. Furthermore, on \dfj 2.0, it can also achieve 18.1 percentage points (pp)
more compilable and 3.0 pp more plausible patches, as well as 6 more plausible fixes and 4 more correct fixes, with a 7.4\% overhead.

Meanwhile, when \tool is instantiated with \incoder, it still produces more compilable and plausible patches, as well as more plausible and correct fixes on both \dfj 1.2 and \dfj 2.0 over the baseline \incoder{}. It eventually gives 6 more correct fixes on \dfj 1.2 and 1 more on \dfj 2.0.

One major difference comparing \tool with \incoder and \codetf is that when \tool is equipped with \incoder, a much larger model than \codetf, it incurs negligible overhead. 
This is because compared to the high cost of autoregressive sampling using larger models, the extra cost from querying the \cefull is much smaller and thus trivializes the overhead of \tool when applied on larger models.
Also, the larger \incoder{} model, whether or not it is applied with \tool{}, can consistently fix more bugs across both \dfj 1.2 and 2.0 than \codetf{}, further confirming prior finding that larger \llm{s} often perform better for \apr~\cite{xia2022repairstudy}.

Overall, the experimental results indicate that \tool can generalize to different sets of bugs (both single-hunk bugs in \dfj 1.2 and 2.0) as well as larger \llm{s} (\incoder{})

\color{\changescolor}\section{Limitations}\color{black}
\label{sec:limitations}

First, to bring out \tool{}'s full potential, it is important that the \cefull can provide useful guidance while remaining strict (\Cref{def:strictCe}). However, it is generally more difficult to balance the usefulness and strictness of a \cefull in many dynamically typed programming languages, such as Python, compared with Java studied in this paper, which is a statically typed programming language. Meanwhile, there is a growing trend of dynamically typed languages adopting support for type hints~\cite{python, typescript, typedclojure}. Considering this, we believe that \tool can still provide significant advantages in such environments.

Another limitation of \tool{} lies in the evaluation.
On the one hand, while it is
true that an increase in the compilation rate of \tool can lead to the discovery of more plausible and correct fixes, it is important to note that a significantly higher compilation rate does not necessarily translate to a proportionally large increase in plausible and correct fixes.
On the other hand, \tool is only evaluated with \codetf for RQ1 and RQ2 with a 5000 sampling budget. \codetf is a rather ``small'' \llm compared to those \llm{s} with billions of parameters. Although we further include \incoder-6.7B as a multi-billion-parameter \llm in RQ4, due to time cost, we only sample 500 times per bug, which may be insufficient to reflect the distribution of the generated patches. Overall, the scope of our evaluation considering two \llm{s} (\codetf and \incoder) and one programming language (Java) is still narrow given that \tool is a general framework that can be instantiated with any pair of an \llm and a \cefull for some programming language.

Finally, despite the examples we show in the paper, our evaluation lacks strong empirical evidence to support the claim that \llm{s} have difficulty in generating rare tokens and how \tool solves the problem. Besides, our evaluation limits the application of \tool to patch synthesis, even though we claim that \tool can be applied to other code generation tasks. In the future, we will apply and evaluate \tool on more diverse code generation tasks.

\section{Threats to Validity}
\label{sec:threats}

\parabf{Internal.} We share the same main internal threat to validity with prior \apr tools where we have to manually examine each plausible patch to determine patch correctness. We address this by carefully analyzing each patch to determine if it is semantically equivalent to the reference developer patch. Furthermore, we have released our full set of correct patches for public evaluation~\cite{correctPatchandDataset}. 

Our use of the \codetf model poses another internal threat where the open-source training dataset of Git\-Hub projects~\cite{husain2020codesearchnet} may overlap with our evaluation of \dfj. We follow prior work~\cite{xia2022alpharepair, xia2022repairstudy} and address this by computing the correct bug fixes of \tool from \dfj that is part of the \codetf training data. In total, 7 out of 66 and 6 out of 50 overlap with training data on \dfj 1.2 and 2.0 respectively. For comparison fairness, if we were to exclude these 7 and 6 bugs and compare them with the previous baseline tools on the remaining bugs, we are still able to achieve the highest bug fixes at 59 and 44 (best baseline at 45 and 29). 
The same threat applies to the use of \incoder, but since its detailed training data is not revealed, we are unable to explicitly address this problem. To mitigate the problem, we only evaluate \incoder
in RQ4, where all the variants face the same potential leakage.

Moreover, our modified implementation of the completion engine requires manual inspection
to guarantee soundness property.
In practice, this is a significant trust base that may introduce false positives during pruning.
However, our theorem still provides a partial guarantee and is able to explain unsoundness.
At the same time, our evaluation result justifies our claims and demonstrates the practicality 
of \tool.

Finally, in our evaluation, we follow the convention used in prior work to directly report the bug fix results without reproducing them, which poses a threat to the reliability of the results. Meanwhile, we only run each of our experiments once, which could introduce extra statistical biases.

\parabf{External.} The main external threat to validity comes from our evaluation dataset where the performance of \tool may not generalize to other datasets. To address this, we compare \tool against state-of-the-art baselines on both \dfj 1.2 and 2.0 to show that the performance is sustained across both versions. To address this further, we plan to evaluate \tool on additional \apr datasets also across different programming languages. 
\section{Conclusion} 

We propose \tool{} --- the first \apr approach to combining the direct usage of \llm{s} (e.g., \codetf and \incoder) with on-the-fly guidance provided by \cefull{s}. During autoregressive token generation, \tool queries the \cefull not only to \emph{prune} invalid tokens but also to \emph{proactively complete} the currently generated partial program, thereby reducing the search space of the \llm.
Our evaluation on a subset of the widely-studied \dfj 1.2 and 2.0 datasets shows \tool is able to achieve the state-of-the-art results. Furthermore, \tool, through the usage of \cefull, is able to generate more valid and compilable patches than prior tools with minimal overhead compared with directly using \llm{s} for \apr.

\section*{Data Availability}
We have open-sourced Repilot, which can be accessed on GitHub at \url{https://github.com/ise-uiuc/Repilot}. Additionally, an immutable artifact for Repilot is publicly available on Zenodo~\cite{correctPatchandDataset}.

\section*{Acknowledgments}
We thank all the reviewers for their insightful comments. We also thank Yifeng Ding for his helpful discussion on this work. This work was partially supported by NSF grants \mbox{CCF-2131943} and \mbox{CCF-2141474}, as well as Kwai Inc.

\bibliographystyle{ACM-Reference-Format}
\bibliography{reference}

\end{document}